\let\oldnl\nl% Store \nl in \oldnl
\newcommand{\nonl}{\renewcommand{\nl}{\let\nl\oldnl}}% Remove line number for one line
\newtheorem{lemma}{Lemma}
\newtheorem{theorem}{Theorem}
\theoremstyle{definition}
 \newtheorem{definition}{Definition}
\newcommand{\ie}{\textit{i.e., }}
\newcommand{\eg}{\textit{e.g., }}
\newcommand{\CIW}{\mathtt{CIW}}       
\newcommand{\CIG}{\mathtt{CIG}}
\newcommand{\CIK}{\CIW_{n,k}}
\newcommand{\etal}{et~al.}
\newcommand{\hit}{\mathcal{H}}
\newcommand{\yes}{\mathtt{yes}}
\newcommand{\no}{\mathtt{no}}
\newcommand{\cnt}{\mathtt{cnt}}
\newcommand{\leader}{\mathtt{leader}}
\newcommand{\mode}{\mathtt{mode}}
\newcommand{\phase}{\mathtt{phase}}
\newcommand{\vl}{V_{L}}
\newcommand{\group}{\mathtt{group}}
\newcommand{\sz}{\mathtt{size}}
\newcommand{\token}{\mathtt{token}}
\newcommand{\RReset}{\mathtt{Reset}}
\title{Complete Graph Identification in Population Protocols\thanks{
This work is supported by JST FOREST Program JPMJFR226U.
}} 
\date{}
\author[1]{Haruki Kanaya}
\affil[1]{Nara Institute of Science and Technology, Nara, Japan}
\author[2]{Yuichi Sudo\thanks{Corresponding Author: sudo@hosei.ac.jp}}
\affil[2]{Hosei University, Tokyo, Japan}
\begin{document}

\maketitle

%TODO mandatory: add short abstract of the document
\begin{abstract}
We consider the population protocol model where indistinguishable state machines, referred to as agents, communicate in pairs. The communication graph specifies potential interactions (\ie communication) between agent pairs. This paper addresses the complete graph identification problem, requiring agents to determine if their communication graph is a clique or not. We evaluate various settings based on: (i) the fairness preserved by the adversarial scheduler -- either global fairness or weak fairness, and (ii) the knowledge provided to agents beforehand -- either the exact population size $n$, a common upper bound $P$ on $n$, or no prior information. Positively, we show that $O(n^2)$ states per agent suffice to solve the complete graph identification problem under global fairness without prior knowledge. With prior knowledge of $n$, agents can solve the problem using only $O(n)$ states under weak fairness. Negatively, we prove that complete graph identification remains unsolvable under weak fairness when only a common upper bound $P$ on the population size $n$ is known.
\end{abstract}

\section{Introduction}
\label{sec:intro}
The population protocol model, proposed by Angluin et al.~\cite{AngluinADFP2006}, is a widely recognized computational model in distributed computing. Originally introduced to represent passively mobile sensor networks, this model is also applicable to chemical reactions systems, molecular computing, and similar systems. In this model, $n$ state machines, called \emph{agents}, form a population. These agents are indistinguishable, \ie they lack unique identifiers. The states of agents are updated through pairwise communications, called \emph{interactions}. The \emph{communication graph} specifies which agent pairs may interact. 
Specifically, the communication graph $G=(V,E)$ is a simple and weakly-connected digraph. 
Each arc $(u,v) \in E$ implies that $u$ and $v$ can interact where $u$ and $v$ serve as the initiator and responder, respectively. 
An adversarial scheduler selects exactly one interactable ordered pair of agents at each time step.
When two agents $u$ and $v$ are selected by the scheduler, they update their states according to their states and their roles.
The two roles, initiator and responder, may be helpful to break the symmetry: two agents with the same state may get different states after they interact. 
Throughout this paper, we denote the number of agents by $n=|V|$ and the set of non-negative integers by $\mathbb{N}$.

The adversarial scheduler must preserve some constraint, either \emph{global fairness} or \emph{weak fairness}.
Global fairness, the most commonly assumed fairness in population protocol literature, ensures that if a configuration $C$ appears infinitely often in an execution, every configuration reachable from $C$ must also appear infinitely often. As we will discuss later, a protocol $\mathcal{A}$ solves a problem $\mathcal{P}$ under global fairness if and only if $A$ solves $\mathcal{P}$ with probability $1$ under the \emph{uniformly random scheduler}, which selects any interactable ordered pair of agents uniformly at random at each time step. Thus, in terms of solvability, the globally fair scheduler can be considered equivalent to the uniformly random scheduler. Weak fairness is straightforward; it merely ensures that an interaction between initiator $u$ and responder $v$ occurs infinitely often if $(u,v) \in E$.

The graph class identification problem was first studied by Angluin, Aspnes, Chan, Fischer, Jiang, and Peralta~\cite{Angluin.10.1007/11502593_8}.
This problem requires the agents to determine whether the communication graph belongs to a given graph class.
The importance of this problem stems from the fact that many protocols in the literature are designed for specific graphs. 
Notably, most studies on population protocols assume a complete communication graph.
Angluin \etal~\cite{Angluin.10.1007/11502593_8} showed that under the global fairness, the population can identify various graph classes, \ie (i) directed lines, (ii) directed rings, (iii) directed stars, (iv) directed trees, (v) graphs with both in-degree and out-degree of every node bounded by a given integer $k$, (vi) graphs containing a given subgraph, (vii) graphs containing any directed cycle, and (viii) graphs containing any odd-length directed cycle.

Yasumi, Ooshita, and Inoue~\cite{yasumi_et_al:LIPIcs.OPODIS.2021.13} studied graph class identification, assuming the communication graph $G=(V,E)$ is \emph{undirected}, \ie $(u,v) \in E$ if and only if $(v,u) \in E$ for all $u,v \in V$. They showed that the algorithms given by Angluin \etal~\cite{Angluin.10.1007/11502593_8} can be extended to identify (undirected) lines, rings, stars, and bipartite graphs under global fairness, by treating two arcs $(u,v)$ and $(v,u)$ as a single undirected edge $\{u,v\}$.
Moreover, they gave two algorithms that identify trees and $k$-regular graphs for a given $k$, respectively,
under global fairness.
The latter
requires all agents to know a common upper bound $P$ on the population size $n=|V|$.
They also explored the solvability under weak fairness, revealing that: (i) lines, rings, and bipartite graphs are unidentifiable even if all agents know the exact number of agents $n$; (ii) stars can be identified if all agents know the exact $n$, but not under the knowledge of a common upper bound $P$ on $n$. The unidentifiability of rings implies that $k$-regular graphs cannot be identified under weak fairness when $k=2$, even with the knowledge of exact $n$. The identifiability of $k$-regular graphs under weak fairness for $k \neq 2$, including when $k=n-1$ (\ie complete graphs), remains an open question.

\begin{table}[t]
    \caption{Complete Graph Identification Protocols. 
    Time complexities are evaluated as the number of steps until the output of all agents stabilize under the uniformly random scheduler, bounded both in expectation and with high probability.
%The number of agents is denoted by $n$.
A given upper bound on $n$ is denoted by $P$, while $k$ is a design parameter satisfying $k\in[1,n]$.
See Section \ref{sec:contribution} for the definition of $\hat{G}$ and $\hit_{\hat{G}}$.}
    \label{table:t1}
    \centering
    \begin{tabular}{c c c c c }
        \hline
           & fairness &  knowledge & states & time complexity\\
        \hline                
        \cite{yasumi_et_al:LIPIcs.OPODIS.2021.13} &global & $n$ & $O(n\log{n})$ & exponential \\$\CIG$   & global & none & $O(n^2)$ & $O((\hit_{\hat{G}}+n^2)n\log{n})$\\        
        $\CIW_n$   & weak & $n$ & $O(n)$ & $O(n^3 \log n)$\\        
        $\CIW_{n,k}$   & weak & $n$ & $O(nk2^k)$ & $O\left(\frac{n^3}{k}\log n \right)$\\        
        \hline 
        impossible  & weak & $P$ & - & - \\        
        \hline 
    \end{tabular}
    
\end{table}

\subsection{Our Contributions}
\label{sec:contribution}
This paper focuses on identifying complete graphs, \ie determining whether the communication graph forms a clique.
As previously mentioned, most studies on population protocols assume that the communication graph is a complete graph. Thus, the identifiability of complete graphs is of significant importance.

We clarify the feasibility of identifying complete graphs based on fairness and prior knowledge of the population size. The summary of our results is listed in Table \ref{table:t1}.
First, we show that under global fairness, complete graphs can be identified without any prior knowledge of the population size. Specifically, we present an algorithm $\CIG$, which identifies complete graphs using $O(n^2)$ states per agent. 
Next, we show that even under weak fairness, complete graphs can be identified with exact knowledge of $n$: we introduce an algorithm $\CIW_n$ that identifies complete graphs with $O(n)$ states per agent under weak fairness.
The requirement for exact knowledge of the population size is justified by the fact that complete graphs cannot be identified under weak fairness even if the agents are aware of a common upper bound $P$ on $n$.
This impossibility directly follows from a lemma provided by Yasumi \etal~\cite{yasumi_et_al:LIPIcs.OPODIS.2021.13} (Lemma 5 in \cite{yasumi_et_al:LIPIcs.OPODIS.2021.13}).

In addition to the number of states, we also evaluate the time complexity of our protocols. In line with the conventions of population protocols, we measure time complexity as the number of interactions required until all agents stabilize their outputs under the uniformly random scheduler. The time complexities of our two protocols, $\CIG$ and $\CIW_n$, are $O((\hit_{\hat{G}}+n^2)n \log n)$ and $O(n^3 \log n)$ interactions both in expectation and with high probability, respectively. 
Here, 
$\hat{G}=(V,\hat{E})$ is the (non-simple) undirected graph obtained by
replacing each arc $(u,v) \in E$ with an undirected edge $\{u,v\}$,
and $\hit_{G'}$ denotes the maximum hitting time of an undirected graph $G'$, \ie the maximum expected number of moves required for a token performing the simple random walk on $G'$ to move from any node $s$ to any node $t$, where the maximum is taken over all possible pairs of $s$ and $t$.
Note that $\hat{E}$
is a multiset where $\{u,v\}$ appears at most twice for any pair $u,v \in V$.
Since the hitting time of any connected undirected graph $G'=(V',E')$ is known to be $O(|E'|\cdot D_{G'})$, where $D_{G'}$ is the diameter of $G'$, the time complexity of $\CIG$ is $O(n^4 \log n)$ interactions both in expectation and with high probability. With the precise knowledge of $n$, $\CIW_n$ achieves lower time complexity (\ie $O(n^3 \log n)$ interactions) and fewer states per agent (\ie $O(n)$ states). 
By introducing a design parameter $k \in [1,n]$, we show that $\CIW_n$ can be generalized to a protocol $\CIW_{n,k}$, which identifies complete graphs within $O((n^3 \log n)/k)$ interactions both in expectation and with high probability and requires $O(nk2^k)$ states per agent.

If we assume exact knowledge of the population size $n$, we can utilize Yasumi \etal's protocol for identifying $k$-regular graphs
\cite{yasumi_et_al:LIPIcs.OPODIS.2021.13} to identify complete graphs by setting $k = n-1$. However, our protocol $\CIW_n$ offers several advantages for identifying complete graphs:
(i) $\CIW_n$ requires only weak fairness, whereas Yasumi \etal's protocol requires global fairness,
(ii) $\CIW_n$ uses fewer agent states,
(iii) $\CIW_n$ is faster.
(See Table \ref{table:t1}.)
Although Yasumi \etal~\cite{yasumi_et_al:LIPIcs.OPODIS.2021.13} focus only on solvability and the number of states, and do not address time complexity, their protocol requires exponential time when $k = \Omega(n)$. Specifically, it does not stabilize before every agent $u$ experiences an event in which its $k$ most recent interactions involve exactly $k$ different neighbors.
If the degree of $u$ is exactly $k$, this event requires at least $\Omega(k^k/k!) = \Omega(e^k/k^{1/2})$ interactions in expectation due to Stirling's approximation $k! > \sqrt{2\pi k}(k/e)^k$. Therefore, for our purpose (\ie $k = n-1$), this protocol requires
$\Omega(e^n/\sqrt{n})$ expected interactions.
It may be worth mentioning that our protocol, designed for \emph{directed communication graphs}, also functions under the \emph{undirected} communication graphs assumed by Yasumi \etal{} By definition, undirected communication graphs are simply special cases of directed graphs.

In the original model of population protocols, agents are defined as \emph{state machines}, and the set of agent states must be specified in advance when designing a protocol. Therefore, the number of agent states must remain constant if a protocol is \emph{uniform}, \ie it does not assume any prior knowledge of the population size. However, solving some problems and/or designing fast protocols require a super-constant number of states. For example, Doty and Soloveichik \cite{DS18} proved that $\omega(1)$ states are necessary to solve the leader election problem within $o(n^2)$ expected interactions under the uniformly random scheduler.
Thus, several studies~\cite{BEF+21,DE19} have adopted a generalized model in which agents are defined as Turing machines, and the number of agent states is defined as $|\Sigma|^\ell$, where $\Sigma$ is the tape alphabet and $\ell$ is the maximum number of tape cells written by any agent in any execution of the protocol.
 \footnote{
 The \emph{expected} number of states can also be evaluated under the uniformly random scheduler, \eg in \cite{BEF+21}.
 }
We also adopt this generalized model for our uniform protocol $\CIG$, which does not assume any prior knowledge, yet utilizes $O(n^2)$ states.  If we want to adhere to the original model for our protocol $\CIG$, we require knowledge of a common upper bound $P$ on $n$. In this case, the protocol $\CIG_P$, now depending on the knowledge of $P$, identifies complete graphs using $O(P^2)$ states.
Here, we never use the knowledge $P$ except for specifying the number of states, that is, bounding the domain of two variables.

\subsection{Related Works}
Since Angluin, Aspnes, Diamadi, Fischer, and Peralta introduced this model in 2004 \cite{AngluinADFP2006}, various problems have been studied, including the leader election problem~\cite{Majority.space,AlistarPoly,AngluinADFP2006,Epidemic,BEF+21,BGK20,DS18,FastSpaceLE,GSU19,MICHAIL2022104698,SudoLB,sudotime}, the majority problem~\cite{AlistarhTradeoff,Majority.space,AlistarhFastExactMajority,SimpleMajority,BenMajority,berenbrink_et_al:LIPIcs.DISC.2018.10,BEF+21,BilkeMajority,Majority.optimal,MertziosMajority,MajorityCounting,MocMajority}, predicate computation \cite{AngluinADFP2006,AAE06,Epidemic,AAER07}, and the counting problem~\cite{AspnesBBS2016,count.space}. 
Most studies focus only on complete graphs, while several explore other graph classes, including arbitrary graphs \cite{FastGraph,NearArbGraph,AngluinADFP2006,MajorityArbitrary,yasumi_et_al:LIPIcs.OPODIS.2021.13,BipartitionYasumi}. There are also numerous studies on \emph{self-stabilizing} population protocols \cite{SSLEar2,CC19,CC20,SUDO2012100,LSLEonArbi,YOKOTA.time,Yokota.near}, initiated by Angluin, Aspnes, Fischer, and Jiang \cite{Angluin.ss}.

There are some variants of the population protocol model, such as the mediated population protocol model and the network construction model, which are closely related to the graph class identification problem. The mediated population protocol model is an extension of the population protocol model introduced by Michail, Chatzigiannakis, and Spirakis~\cite{MICHAIL20112434}. This model enables information to be stored not only in the states of agents but also in the communication links between them. The network construction model, an extension of the mediated protocol model, was introduced by Michail, and Spirakis~\cite{networkconstruction}. This model aims to construct specific networks based on the states of agents and the information carried by links between agents.

\section{Preliminaries}

\subsection{Population Protocols}\label{def:2.1}

The communication graph is represented by a simple and weakly connected digraph $G=(V,E)$,
where $V$ denotes the set of agents, and $E\subseteq \{(a,b) \in V \times V \mid a\neq b\}$ denotes the set of ordered pairs of agents that can interact: agents $u$ and $v$ can have interaction such that $u$ is the initiator and $v$ is the responder if $(u,v) \in E$. 
We say that $v$ is  an \emph{out-neighbor} of $u$ if $(u,v) \in E$.
At each time step, an adversarial scheduler selects one ordered pair $(u,v) \in E$.

As mentioned in Section \ref{sec:intro}, we adopt the original model for non-uniform protocols (\ie $\CIW_n$ and $\CIW_{n,k}$)
and adopt the generalized model given by \cite{DE19} for uniform and $\omega(1)$-space protocols.
In the original model, agents are defined as finite state machines. A protocol is defined as a 5-tuple $\mathcal{P}=(Q,\rho,Y,\pi,\delta)$, where $Q$ represents the set of agent states, $\rho \in Q$ is the initial state,
$Y$ the set of output symbols, $\pi:Q\rightarrow Y$ the output function, and $\delta:Q\times Q\rightarrow Q\times Q$ the state transition function. An agent outputs the symbol $\pi(s) \in Y$ when in state $s \in Q$.
Suppose agents $u$ and $v$ engage in an interaction, with $u$ as the initiator and $v$ as the responder, while in states $p$ and $q$ respectively. They then update their states to $p'$ and $q'$ respectively, where $(p',q')=\delta(p,q)$.
The number of states of a protocol $\mathcal{P}=(Q,\rho,Y,\pi,\delta)$ is simply defined as $|Q|$.
Roughly speaking, the generalized model, which accommodates both uniform and super-constant protocols, can be described as follows:
\begin{itemize}
    \item The definition of a protocol $\mathcal{P}=(Q,\rho,Y,\pi,\delta)$ is the same as in the original model
    except for the set $Q$ of agent states.
    \item An agent maintains a constant number of variables $x_1, x_2, \dots, x_s$, and the combination of their values constitutes the state of the agent. The first element $Q$ of the protocol is the set of all such states.
    The domain of each variable may not be bounded, thus $Q$ may be an infinite set.    
    \item For any agent $a \in V$, define $a$'s amount of information at time step $t$ as $f(a,t)=\sum_{i=1}^s \iota_i$, where each $\iota_i$ is the number of bits required to encode the value of variable $x_i$. The number of states of the protocol $\mathcal{P}$ is defined as $\max\{2^{f(a,t)} \mid a \in V, t=0,1,\dots\}$ in any execution of protocol $\mathcal{P}$. Therefore, the number of states of $\mathcal{P}$ may depend on the population size $n$, even if $\mathcal{P}$ itself does not depend on $n$.
\end{itemize}
Refer to Section 2 in \cite{DE19} for the formal definition of the generalized model, which is based on a Turing Machine.

A global state or a \emph{configuration}, denoted $C:V\to Q$, specifies the state of all agents.
The \emph{initial configuration}, where all agents are in the initial state $\rho$, is denoted by $\mathcal{I}$.
A configuration $C$ \emph{changes} to $C'$ via an interaction $(u,v) \in E$, denoted by $C \stackrel{(u,v)}{\to} C'$, if $(C'(u),C'(v)) = \delta(C(u),C(v))$ and $C'(w) = C(w)$ for all $w \in V \setminus \{u,v\}$. When $C$ changes to $C'$ via any interaction, we say that $C$ \emph{can change} to $C'$, denoted as $C\rightarrow C'$. An \emph{execution} is defined as an infinite sequence of configurations $\Xi = C_0, C_1, \dots$ that starts from the initial configuration $C_0=\mathcal{I}$ and satisfies $C_i \to C_{i+1}$ for every $i \in \mathbb{N}$. A configuration $C'$ is \emph{reachable} from $C$ if there exists a sequence of configurations $C_0, C_1, \dots, C_t$ with $C = C_0$, $C' = C_t$, and $C_i \to C_{i+1}$ for each $i$ from $0$ to $t-1$.
A configuration $C$ is \emph{stable} if, for any agent $a \in V$ and any configuration $C'$ reachable from $C$, agent $a$ outputs the same symbol in both $C$ and $C'$, \ie $\pi(C(a)) = \pi(C'(a))$.
We say that an execution \emph{stabilizes} when it reaches a stable configuration.

Fairness is defined as a predicate on executions, \ie it specifies the set of acceptable executions of a protocol $\mathcal{P}$.
An execution $\Xi =C_0,C_1,\dots$ is \emph{globally fair}
if for any configuration $C$ that appears infinitely often in $\Xi$,
every configuration reachable from $C$ also appears infinitely often in $\Xi$.
An execution $\Xi =C_0,C_1,\dots$ is \emph{weakly fair}
if there exists a sequence of interactions $\gamma = \gamma_0,\gamma_1,\dots$
such that every ordered pair $(u,v) \in E$ appears infinitely often in $\gamma$.

\subsection{Complete Graph Identification}
A communication graph $G=(V,E)$ is \emph{complete}
if $(u,v) \in E$ for any distinct agents $u,v \in V$.
A protocol $\mathcal{P}$ \emph{solves the complete graph identification} or \emph{identify complete graphs}
under global fairness (resp.~weak fairness)
if any globally fair (resp.~weakly fair) execution of $\mathcal{P}$
eventually reaches a stable configuration where
the outputs of all agents are $\yes$ if $G$ is complete, and $\no$ otherwise.

\subsection{Time Complexity}
The time complexity is defined as the number of interactions required to reach a stable configuration, under the assumption of a uniformly random scheduler that selects an ordered pair of agents uniformly at random at each time step.\footnote{
When focusing on complete graphs, time complexity is often expressed as \emph{parallel time}, which refers to the number of interactions required for stabilization divided by the number of agents. In this paper, we do not use parallel time because we are addressing complete graph identification on arbitrary graphs.
}
Given that we assume a uniformly random scheduler, the time complexity is considered a random variable; therefore, we evaluate it in terms of expectation and/or with high probability.
In this paper, \emph{with high probability} is defined as \emph{with probability $1-O(1/n)$}.

Introducing (asynchronous) \emph{rounds} can be useful
for analyzing the time complexities of protocols that work under weak fairness.

\begin{definition}[rounds]
Let $\gamma = \gamma_0, \gamma_1, \dots$ be an infinite sequence of interactions. 
The first round of $\gamma$ is the shortest prefix of $\gamma$ such that 
every $(u,v) \in E$ appears at least once in this prefix.
Let $t_1$ be the length of the first round of $\gamma$, 
\ie the prefix is $\gamma_0, \gamma_1, \dots, \gamma_{t_1-1}$.
For any $i \ge 1$, the $(i+1)$-th round of $\gamma$ is defined as the shortest prefix of the infinite sequence 
$\gamma_{S_i}, \gamma_{S_i+1}, \dots,$ where each pair $(u,v) \in E$ appears at least once in the prefix,
with $S_i = \sum_{k=1}^i t_k$ and $t_k$ being the length of the $k$-th round of $\gamma$.
\end{definition}

The standard analysis of the coupon collector's problem give the following lemma by considering each $(u,v) \in E$ as a type of coupon.
Note that
$|E| = \Omega(n)$ holds here by the weak connectivity of the communication graph.

\begin{lemma}
\label{lemma:round}
Let $\Gamma = \Gamma_0, \Gamma_1, \dots$ be the infinite sequence of interactions chosen by the uniformly random scheduler, where $\Pr(\Gamma_i=(u,v)) = 1/|E|$ for any $(u,v) \in E$ and $i \ge 0$, and this probability is independent of any other interaction $\Gamma_j\ (j \neq i)$.
Then, for any $i \ge 0$, the $i$-th round of $\Gamma$ has an expected length of $O(|E| \log |E|)=O(|E| \log n)$. Moreover, for any constant $c \ge 1$, the length is $O(c \cdot |E| \log n)$ with probability $1-O(n^{-c})$.
\end{lemma}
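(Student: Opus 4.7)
The plan is to reduce directly to the coupon collector's problem. Under the uniformly random scheduler, the interactions $\Gamma_0,\Gamma_1,\dots$ are i.i.d.\ uniform draws from $E$, so by the strong Markov property the length of the $i$-th round, measured from its starting index $S_{i-1}$, has the same distribution for every $i$. It therefore suffices to analyze the length $T$ of the first round, which by definition is exactly the first time that every element of $E$ (viewed as a coupon of type $(u,v)$) has been drawn at least once. Before doing so, I would record the size bounds $n-1 \le |E| \le n(n-1)$, using weak connectivity for the lower bound, so that $\log|E| = \Theta(\log n)$ and $|E|^{-1} = O(n^{-1})$.

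For the expectation, the classical analysis of coupon collection gives $\mathbf{E}[T] = |E| \cdot H_{|E|}$ where $H_m = \sum_{k=1}^{m} 1/k$. Using $H_m = \ln m + O(1)$ and $\log|E| = O(\log n)$, this yields $\mathbf{E}[T] = O(|E| \log n)$, which is the first claim.

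For the high-probability bound, I would use a union bound over pairs. Fix a constant $c \ge 1$ and an integer $t$. For any fixed $(u,v) \in E$, the probability that $(u,v)$ is absent from $\Gamma_0,\dots,\Gamma_{t-1}$ equals $(1-1/|E|)^t \le e^{-t/|E|}$. Setting $t = K \cdot c \cdot |E| \log n$ for a sufficiently large absolute constant $K$, we obtain $e^{-t/|E|} \le n^{-Kc}$, and a union bound over the at most $n^2$ pairs in $E$ gives
\[
\Pr[T > t] \;\le\; |E| \cdot e^{-t/|E|} \;\le\; n^{2-Kc}.
\]
Choosing $K \ge (c+2)/c$ (for instance $K=3$) makes this at most $n^{-c}$, hence $O(n^{-c})$, proving that $T = O(c\,|E|\log n)$ with probability $1-O(n^{-c})$.

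The argument for the $i$-th round with $i \ge 2$ is identical: conditional on $S_{i-1}$, the future sequence $\Gamma_{S_{i-1}},\Gamma_{S_{i-1}+1},\dots$ is again i.i.d.\ uniform on $E$, so the same coupon-collector analysis applies verbatim. There is no real obstacle here; the only point requiring care is keeping the constant in the union bound honest so that it absorbs the factor $|E| \le n^2$ without disturbing the stated $O(n^{-c})$ failure probability.
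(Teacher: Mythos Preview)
Your proposal is correct and follows exactly the approach indicated in the paper, which simply invokes the standard coupon collector's analysis with each $(u,v)\in E$ as a coupon type and the bound $|E|=\Omega(n)$ from weak connectivity. The paper does not spell out the details, so your explicit expectation computation and union-bound tail estimate are a faithful elaboration of what the paper sketches.
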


\section{Complete Graph Identification under Weak Fairness}
This section presents the protocol $\CIW_n$, which identifies complete graphs using the exact knowledge of $n$. It also introduces a generalized version, $\CIK$, which incorporates a design parameter $k$ where $1 \leq k \leq n$.

\begin{algorithm}[t]
\caption{$\CIW_n$ (a protocol for weak fairness with exact $n$)}
\label{al:n}
\setcounter{AlgoLine}{0}
\nonl\when{$a, b \in V$ interacts with $a$ as initiator and $b$ as responder}{    
    \uIf{$a \in \vl \land b \in \vl$}{        
        $a.\cnt \gets a.\cnt+b.\cnt$\;
        $(b.\leader,b.\cnt) \gets (F,0)$\;    
        \If{$a.\cnt=n$}{
            $(a.\phase,a.\cnt) \gets (2,0)$\;
        }
    }\uElseIf{$a \in \vl \cap V_2 \land a.\mode = b.\mode$}{
        $a.\cnt \gets a.\cnt+1$\;        
        $b.\mode \gets 1-b.\mode$\;                
        \If{$a.\cnt = n-1$}{
            $(a.\phase,a.\cnt,a.\mode)\gets(3,1,1-a.\mode)$\;
        }
    }\uElseIf{$a \in \vl \cap V_3 \land b \in V_1$}{
        $a.\leader \gets F$\;         
        $(b.\leader,b.\phase) \gets (L,2)$\;
    }        
    \uElseIf{$a \in V_3 \land b \in V_3 \land a.\cnt > 0 \land b.\cnt > 0$}{
        $a.\cnt \gets a.\cnt + b.\cnt$\;
        $b.\cnt \gets 0$\;
        \If{$a.\cnt = n$}{
            $a.\phase \gets 4$
        }
    }
    \ElseIf{$a \in V_4$}{
        $b.\phase \gets 4$\;
    }
}
\end{algorithm}
\subsection{Protocol \texorpdfstring{$\CIW_n$}{CIW\_n}}\label{sec:ciw}

In protocol $\CIW_n$, each agent $a \in V$ maintains four variables: $a.\leader \in \{F,L\}$, $a.\phase \in \{1,2,3,4\}$, $a.\mode \in \{0,1\}$, and $a.\cnt \in \{0,1,\dots,n\}$, resulting in exactly $16(n+1) = O(n)$ states per agent. Initially, $a.\leader = L$, $a.\phase = 1$, $a.\mode = 0$, and $a.\cnt = 1$. An agent $a$ is considered a \emph{leader} if $a.\leader = L$; otherwise, it is termed a \emph{follower}. The set of leaders is denoted by $\vl \subseteq V$. For each $i \in \{1,2,3,4\}$, an agent $a$ is said to be in phase $i$ if $a.\phase = i$, and the set of agents in phase $i$ is denoted by $V_i \in V$. Agents output $\yes$ if and only if they are in phase 4. The objective is to ensure that all agents reach phase 4 if the communication graph $G$ is complete 
(\ie the out-degrees of all agents are $n-1$), 
and that no agents reach phase 4 otherwise.

Our protocol $\CIW_n$, presented in Algorithm \ref{al:n}, can be summarized as follows:
\begin{itemize}
\item Agents in phase 1 participate in leader election (lines 1--5).  Once a unique leader is elected, it advances to phase 2.
\item The unique leader in phase 2 counts the number of its out-neighbors (lines 6--8). It moves to phase 3 upon determining that the number of its out-neighbors equals $n-1$ (lines 9--10). At the next interaction where this leader meets a follower in phase 1, the leadership status is transferred to the follower, who then moves to phase 2 and starts its own count (line 11--13). 
\item Agents in phase 3 count the number of agents in phase 3 (lines 14--16). If some agent observes that $|V_3| = n$, indicating that the out-degrees of all agents are $n-1$, it transitions to phase 4 (line 17--18).
\item
Once an agent in phase 4 appears, all agents proceed to phase 4 via the well-known one-way epidemic protocol \cite{Epidemic} (line 19--20):
In each interaction where an agent $a \in V_4$ is the initiator and an agent $b \in V_3$ is the responder, $b$ transitions to phase 4.
\end{itemize}

In what follows, we describe how to elect a leader, count the out-degree, and determine if $|V_3| = n$, using the variable $\cnt$. Note that $\cnt$ is utilized in phases 1, 2, and 3 for different purposes. This \emph{re-use} of $\cnt$ contributes to reducing the number of states of $\CIW_n$ from $O(n^3)$ to $O(n)$.

Leaders in phase 1 elect the unique leader.
Initially, all agents are leaders. When two leaders meet, one becomes a follower. Given that all agents know the exact value of $n$, the completion of the leader election is easily detected as follows:
(i) initially, all agents set $\cnt = 1$,
(ii) when two leaders interact, the surviving leader absorbs the $\cnt$ of the defeated leader,
(iii) the unique leader recognizes its uniqueness when $\cnt = n$.
Upon this detection, the unique leader advances to phase 2 with resetting $\cnt$ value to zero.

A leader in phase 2 determines if its out-degree is $n-1$. When a new leader $a \in V_2$ emerges, either from the leader election or the transfer of leadership status, $a.\cnt$ is  0, and all agents share the same $\mode$ value, either 0 or 1. Each time $a$, as the initiator, meets another agent $b$ with the same $\mode$ value, it increments $a.\cnt$ by one and toggles $b.\mode$ from 0 to 1 or from 1 to 0. Thus, $a.\cnt$ reaches $n-1$ if and only if $a$ meets $n-1$ different out-neighbors, implying that $a$'s out-degree is $n-1$. Then, it transitions to phase 3, setting $\cnt$ to 1 and toggling its $\mode$ value. Toggling $a.\mode$ ensures that all agents share the same $\mode$ value when $a$ meets a follower $b$ in phase 1 and transfers its leadership status to $b$, making $b$ a new leader in phase 2.

Agents in phase 3 determine whether $|V_3| = n$. As mentioned previously, every agent sets its $\cnt$ value to 1 upon transitioning to phase 3. When two agents in phase 3 with non-zero $\cnt$ values meet, one agent absorbs the $\cnt$ value of the other. The weak fairness ensures that all agents will meet each other infinitely often, leading eventually to a configuration where one agent in phase 3, denoted as $a$, has $a.\cnt = |V_3|$. If $a.\cnt = n$, then $a$ confirms that the out-degrees of all agents are exactly $n-1$, indicating that the communication graph is complete. Subsequently, $a$ advances to phase 4.

\begin{lemma}
\label{lemma:v34}
$\sum_{a\in V_3 \cup V_4} a.\cnt = |V_3 \cup V_4|$
always holds in an execution of $\CIW_n$.
\end{lemma}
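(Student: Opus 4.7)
The plan is to prove the equation as an invariant by induction on the number of interactions performed. The base case is immediate: in the initial configuration every agent is in phase $1$, so $V_3 \cup V_4 = \emptyset$ and both sides equal $0$. For the inductive step, I would inspect each of the five guarded rules in Algorithm~\ref{al:n} and verify that it preserves the equation.

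Three rules act on $V_3 \cup V_4$ in a transparent way. The promotion at lines 9--10 inserts an agent into $V_3$ with $\cnt = 1$, so both sides grow by one. The merge at lines 14--16 redistributes $\cnt$ between two agents already in $V_3$, leaving the sum invariant. The promotion at lines 17--18 converts an agent of $V_3$ into one of $V_4$ without touching its $\cnt$, preserving both the sum and $|V_3 \cup V_4|$. The rules at lines 6--8 and lines 11--13 never alter $\cnt$ for any agent of $V_3 \cup V_4$ and never move any agent into or out of $V_3 \cup V_4$, so they trivially preserve the invariant.

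The delicate cases are the leader-merger rule at lines 1--5 and the epidemic rule at lines 19--20. To dispatch them I would carry two auxiliary invariants along in the induction: (A) whenever $V_4 \neq \emptyset$, $V_1 \cup V_2 = \emptyset$; and (B) once any agent has reached a phase $\ge 2$, exactly one leader exists in the system. From (B), the leader-merger rule can only fire while both $a$ and $b$ lie in $V_1$, so it does not touch $V_3 \cup V_4$; from (A), whenever the epidemic rule fires its responder $b$ is already in $V_3 \cup V_4$, so the update $b.\phase \gets 4$ preserves both $\sum \cnt$ and $|V_3 \cup V_4|$.

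The main obstacle is establishing (A). The key observation is that the first agent ever to enter $V_4$ does so via line 18, where just before the update $a.\cnt = n$ for some $a \in V_3$. Applying the main invariant (inductive hypothesis) at this moment forces $|V_3 \cup V_4| \ge a.\cnt = n$, so at that instant every agent lies in $V_3 \cup V_4$; thereafter only the rule at lines 14--18 and the epidemic rule at lines 19--20 can fire, and both merely shuffle agents between $V_3$ and $V_4$, preserving (A). Invariant (B) is the standard leader-election observation: the first promotion to phase $2$ at line 5 requires a leader with $\cnt = n$, which by the absorption pattern of lines 1--5 can only arise after every other agent has been demoted to a follower, and thereafter the only changes to the leader set come from the transfer at lines 12--13, which preserves its cardinality.
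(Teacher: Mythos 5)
Your proof is correct and follows the same invariant-preservation strategy as the paper's, which simply observes that agents enter phase 3 with $\cnt=1$, that phase-3 merges preserve the pairwise sum of $\cnt$, and that phase-4 agents never update $\cnt$. Your version is in fact more careful on a point the paper leaves implicit: the epidemic rule at lines 19--20 could in principle move a phase-1 or phase-2 agent (whose $\cnt$ may be $0$) directly into $V_4$ and break the equation, and your auxiliary invariant (A) --- bootstrapped from the main invariant at the moment some phase-3 agent first reaches $\cnt = n$, which forces $|V_3 \cup V_4| = n$ --- together with the uniqueness of the leader in (B) is exactly what rules this out.
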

\begin{proof}
This lemma follows directly from the facts that:
(i) each time an agent enters phase 3, it sets its $\cnt$ value to 1;
(ii) an agent in phase 3 updates its $\cnt$ value only during an interaction with another agent in phase 3, which does not alter the sum of the $\cnt$ values of those two agents;
and (iii) no agent in phase 4 updates its $\cnt$ value.
\end{proof}

% {nteiri}
\begin{theorem}\label{n_teiri}
Given the exact population size $n$, protocol $\CIW_n$ solves the complete graph identification problem under weak fairness, using $O(n)$ states.
\end{theorem}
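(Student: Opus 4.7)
The $O(n)$ state bound is immediate from the variable domains: $|\{F,L\}|\cdot|\{1,2,3,4\}|\cdot|\{0,1\}|\cdot|\{0,\dots,n\}| = 16(n+1)$. The substantive task is to prove correctness in two directions under weak fairness, namely \emph{soundness} (if $G$ is not complete, no agent ever outputs $\yes$) and \emph{completeness} (if $G$ is complete, every weakly-fair execution reaches a stable configuration in which every agent outputs $\yes$).

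For soundness, the plan is to establish a per-epoch counting inequality: during any maximal interval in which a fixed agent $a$ is the unique phase-$2$ leader, each out-neighbor of $a$ contributes at most once to $a.\cnt$. This is because the increment on line~7 is bundled with the $\mode$ flip on line~8 of the counted responder, no other rule modifies $\mode$ while $a$ remains the phase-$2$ leader, and $a.\mode$ itself is constant throughout the epoch; hence an already-counted out-neighbor's $\mode$ no longer matches $a.\mode$. Consequently, if $G$ is not complete and $a^{*}$ is any agent of out-degree at most $n-2$, then whenever $a^{*}$ is the phase-$2$ leader its $\cnt$ stays below $n-1$, the guard on line~9 never fires, and $a^{*}$ never leaves phase~$2$. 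This forces $|V_3\cup V_4|\le n-1$ throughout the execution, so Lemma~\ref{lemma:v34} implies no phase-$3$ agent ever attains $\cnt = n$, the guard on line~17 is never triggered, and phase~$4$ is never entered.

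For completeness, I would chain four liveness claims, each invoking weak fairness. First, in phase~$1$ every leader-leader interaction strictly decreases $|\vl|$ while preserving $\sum_{a\in\vl}a.\cnt = n$, so weak fairness eventually yields a unique leader with $\cnt = n$, which enters phase~$2$ via line~5. Second, to show that each phase-$2$ epoch terminates, I would prove the \emph{mode-invariant}: whenever a new phase-$2$ leader is created (either by line~5 or by line~13), every agent shares its $\mode$-value with the new leader. Granting this, a phase-$2$ leader of out-degree $n-1$ meets each out-neighbor in the matching $\mode$ and reaches $\cnt = n-1$ after finitely many weakly-fair interactions, entering phase~$3$ by line~10 and then transferring leadership via line~13; iterating this at most $n$ times drives $|V_3| = n$. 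Third, once all agents lie in phase~$3$, weak fairness together with the merge rule on lines~14--16 concentrates the total $\cnt$-mass on a single agent, and by Lemma~\ref{lemma:v34} that agent reaches $\cnt = n$ and transitions to phase~$4$ via line~18. Fourth, the one-way epidemic on lines~19--20 then propagates phase~$4$ along the weakly-connected communication graph until every agent stably outputs $\yes$.

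The main subtlety will be maintaining the mode-invariant across the leadership-transfer step. When an outgoing phase-$2$ leader fires line~10, its self-toggle has to exactly compensate for the fact that, during its epoch in a complete graph, it has toggled the $\mode$ of every other agent once; only then is a globally uniform $\mode$ value restored in time for the next phase-$2$ leader, handed its role by line~13, to begin counting correctly. Verifying that this self-toggle indeed reestablishes uniformity and that no stray rule perturbs $\mode$ between epochs is routine case analysis but is the place where the bookkeeping must be tight.
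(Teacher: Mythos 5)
Your proposal is correct and follows essentially the same route as the paper's proof: the phase-1 invariant $\sum_{a\in V_1}a.\cnt=n$ for leader uniqueness, the "each out-neighbor contributes at most once" counting argument combined with Lemma~\ref{lemma:v34} for soundness, and the mode-invariant across leadership transfers (restored by the leader's self-toggle on line~10) driving the chain of liveness claims for completeness. The only difference is presentational: you make the per-epoch counting bound and the mode-invariant explicit as named claims, whereas the paper states them inline.
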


\begin{proof}
While all agents are in phase 1, the invariant $\sum_{a \in V_1} a.\cnt = n$ always holds. Thus, only one leader may have $a.\cnt = n$ and transition to phase 2, with all agents having a zero $\cnt$ value. If the communication graph is complete, weak fairness guarantees that a unique leader is eventually elected. If not, the leader election may remain incomplete, but this does not matter to us. In both scenarios, once an agent transitions to phase 2, there is always exactly one leader.

Suppose that the communication graph is \emph{not} complete. Then, there exists at least one agent, say $b$, with an out-degree of at most $n-2$.
Even if $b$ transitions from phase 1 to phase 2, its $\cnt$ value is zero at that time. Each out-neighbor of $b$ can increase $b.\cnt$ by at most one, ensuring that $b.\cnt$ never reaches $n-1$. Therefore, $b$ does not advance to phase 3, and consequently, $|V_3| < n$ always holds. By Lemma \ref{lemma:v34}, this implies that no agent observes $\cnt = n$, and thus no agent transitions to phase 4. Therefore, every agent always outputs $\no$ from the beginning of an execution if the communication graph is not complete.

Next, suppose that the communication graph is complete.
As mentioned previously, the unique leader is eventually elected. 
When this leader, say $\ell$, is elected and moves to phase 2, 
its $\cnt$ value is zero and the $\mode$ values of all agents are zero. 
Weak fairness ensures that $\ell$ meets its $n-1$ out-neighbors as the initiator,
changing the $\mode$ values of all agents including $\ell$ to $1$,
and then moves to phase 3.
The leader $\ell$ gives the leadership status to another agent, say $\ell'$,
in the next interaction.
When $\ell'$ becomes a leader, $\ell'.\cnt=0$ and the $\mode$ values of all agents are 1.
Similarly, weak fairness ensures that $\ell'.\cnt$ reaches $n-1$, the $\mode$ values of all agents go back to $0$, and $\ell'$ transitions to phase 3.
Repeating this process, all agents eventually enter phase 3.
By Lemma \ref{lemma:v34}, weak fairness ensures that the event that some agent $a$ in phase 3 observes $a.\cnt = n$ eventually occurs, and $a$ enters phase 4.
After that, all agents enter phase 4 by meeting $a$ or another agent in phase 4,
by weak fairness.
\end{proof}

\begin{lemma}
\label{lemma:incomplete}
If the communication graph is not complete,
every execution of $\CIW_n$ stabilizes in zero time,
\ie the initial configuration $\mathcal{I}$ is stable.
\end{lemma}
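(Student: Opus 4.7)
The plan is to exploit the argument already developed in the proof of Theorem~\ref{n_teiri}. Since every agent in $\mathcal{I}$ is in phase 1 and thus outputs $\no$, and since an agent outputs $\yes$ only while in phase 4, I would first reduce the goal to showing that no configuration reachable from $\mathcal{I}$ contains any agent in phase 4.

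Next, I would fix an agent $w$ with $\deg^+(w) \leq n-2$, which exists because $G$ is not complete, and invoke the reasoning from the proof of Theorem~\ref{n_teiri} to conclude that $w$ never reaches phase 3: whenever $w$ enters phase 2, $w.\cnt$ is reset to zero, $w.\mode$ stays fixed (as $w$ would be the unique leader in $V_2$ during its stay there), and each out-neighbor of $w$ can contribute at most one increment to $w.\cnt$ before its $\mode$ flips out of agreement with $w.\mode$. Hence $w.\cnt$ stays bounded by $\deg^+(w) \leq n-2$, the phase 2-to-3 transition never fires for $w$, and $w \notin V_3 \cup V_4$ in every reachable configuration.

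Finally, I would rule out any agent ever reaching phase 4 by contradiction. The first such transition cannot arise from the epidemic rule, which requires a pre-existing phase-4 agent, so it must come from the rule on lines 17--18 applied to some $a \in V_3$ with $a.\cnt = n$. Lemma~\ref{lemma:v34} then forces $|V_3 \cup V_4| \geq n$, i.e., $V_3 \cup V_4 = V$, contradicting $w \notin V_3 \cup V_4$. Hence no agent ever enters phase 4, so every reachable configuration matches $\mathcal{I}$ on all outputs, which by definition proves $\mathcal{I}$ is stable. The only subtle point is justifying uniqueness of the leader in $V_2$, which can be extracted by tracking the conservation laws implicit in the protocol's transition rules (the phase 1-to-2 rule consumes $n$ units from $\sum_{c \in V_1 \cap \vl} c.\cnt$, a quantity that starts at $n$ and never grows, while the leader-handoff rule preserves $|V_2 \cap \vl| + |V_3 \cap \vl|$); this argument is essentially encoded in the proof of Theorem~\ref{n_teiri}.
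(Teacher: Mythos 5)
Your proposal is correct and follows essentially the same route as the paper: the paper's proof simply invokes the argument from the proof of Theorem~\ref{n_teiri} (an agent of out-degree at most $n-2$ never reaches phase 3, so by Lemma~\ref{lemma:v34} no agent ever observes $\cnt=n$ and no agent enters phase 4, hence $\mathcal{I}$ is stable). Your additional details — the first phase-4 transition must come from lines 17--18 and the leader-uniqueness/conservation bookkeeping — are just an explicit unpacking of what the paper's Theorem~\ref{n_teiri} proof already establishes.
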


\begin{proof}
As shown in the proof of Theorem \ref{n_teiri},
no agent enters phase 4 if the communication graph is not complete,
which gives the lemma.
\end{proof}

\begin{lemma}
\label{lemma:complete}
If the communication graph is complete, every execution of $\CIW_n$ stabilizes in $O(n)$ rounds under weak fairness.
\end{lemma}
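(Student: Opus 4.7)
The plan is to partition the execution into four temporal segments corresponding to the four phases of the protocol and bound each segment by the number of rounds.

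Segment~1 (Phase-$1$ leader election). The claim I would prove is that after one round a unique leader has emerged and has transitioned to phase~$2$. The argument is by contradiction: if two agents $a,b$ are both phase-$1$ leaders at the end of a round, the round contains the interaction $(a,b)$, and at that instant both were phase-$1$ leaders (a demoted agent never returns to leader status), so line~$3$ would have demoted one of them. Hence at most one phase-$1$ leader survives the round. Because the sum of $\cnt$ over phase-$1$ leaders stays equal to $n$ until the last such leader transitions out, this survivor must have triggered lines~$4$--$5$ and entered phase~$2$ during the round.

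Segment~2 (Phase-$2$ counts and chain of leadership transfers). The key invariant to maintain is that every time a new leader $\ell$ enters phase~$2$, all agents share the same $\mode$ value as $\ell$. This holds initially (all $\mode$ values equal $0$) and is re-established each time the current leader reaches $\cnt=n-1$: during its count it toggled the $\mode$ of each of the $n-1$ other agents exactly once, then toggles its own $\mode$ on entering phase~$3$, so all $n$ agents once again agree. Given the invariant, one round suffices for $\ell$ to meet every out-neighbor once as initiator with matching $\mode$, raising $\cnt$ from $0$ to $n-1$ and triggering the jump to phase~$3$; a second round suffices for the now-phase-$3$ leader to meet some phase-$1$ follower and hand over leadership. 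With at most $n$ such leader episodes before every agent has reached phase~$3$, Segment~$2$ takes $O(n)$ rounds in total. Toggling the $\mode$ of an agent already in phase~$3$ is harmless, since $\mode$ is consulted only by phase-$2$ leaders.

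Segment~3 (Phase-$3$ aggregation). The crucial observation is that the aggregation rule in lines~$14$--$16$ requires both participants to have positive $\cnt$, so an agent whose $\cnt$ drops to $0$ cannot later regain a positive value. Hence the set of alive (positive-$\cnt$) agents in $V_3$ is monotonically non-increasing. Consider the first full round that begins after the last agent enters phase~$3$. If two agents $a,b$ were both alive at the end of that round, then by monotonicity both were alive throughout it, so the interaction $(a,b)$ scheduled during the round would have killed one of them---contradiction. Hence at most one agent remains alive at the end, and by Lemma~\ref{lemma:v34} its $\cnt$ equals $n$, which triggers the transition into phase~$4$.

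Segment~4 (Phase-$4$ epidemic) and conclusion. As soon as one agent is in phase~$4$, the one-way epidemic rule in lines~$19$--$20$ propagates $\phase=4$ to every remaining agent within a single round, because every ordered pair interacts during that round. Adding the four segment bounds yields $1+O(n)+1+1=O(n)$ rounds. The most delicate step is Segment~$2$, where I would need to verify the $\mode$ invariant with care and check that a phase-$1$ follower is always available for the transfer except in the very last leader episode, where no transfer is needed and the leader simply stays in phase~$3$ with $\cnt=1$.
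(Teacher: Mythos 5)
Your proof is correct and follows essentially the same route as the paper's: one round for phase-1 leader election, two rounds per leader episode for at most $n$ episodes in phase 2, one round for the phase-3 aggregation, and one round for the phase-4 epidemic, summing to $O(n)$ rounds. The paper states this same decomposition more tersely (arriving at $2n+1$ rounds for phase 3 plus two more), while you additionally spell out the monotonicity and $\mode$-invariant details that the paper leaves implicit.
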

\begin{proof}
The leader election will be completed before every pair of agents meets, thus finishing in one round. Each agent requires one round to recognize that its out--degree is $n-1$ and advance to phase 3. Simultaneously, it takes one round for a leader in phase 3 to encounter an agent in phase 1 and transfer leadership status. Therefore, all agents reach phase 3 within $2n+1$ rounds. An additional round is sufficient for some agent to detect that $|V_3|=n$ and move to phase 4, after which all agents enter phase 4 in one subsequent round.
\end{proof}

The following theorem is derived from Lemmas \ref{lemma:round}, \ref{lemma:incomplete}, and \ref{lemma:complete}.
\begin{theorem}\label{theorem:ntime}
Any execution of $\CIW_n$ stabilizes in $O(n^3 \log n)$ steps, both in expectation and with high probability,
under the uniformly random scheduler.
\end{theorem}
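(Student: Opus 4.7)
The plan is to derive Theorem \ref{theorem:ntime} by reducing the step-count analysis under the uniformly random scheduler to a round-count analysis under weak fairness, and then converting rounds to steps using Lemma \ref{lemma:round}. Since the uniformly random scheduler almost surely produces a weakly fair execution, both Lemma \ref{lemma:incomplete} and Lemma \ref{lemma:complete} apply to its sample paths.

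First I would split on the structure of the communication graph. If $G$ is not complete, Lemma \ref{lemma:incomplete} says the initial configuration is already stable, so the stabilization time is $0$, which trivially satisfies the $O(n^3 \log n)$ bound both in expectation and with probability $1$. The interesting case is when $G$ is complete, in which case $|E| = n(n-1) = \Theta(n^2)$, and Lemma \ref{lemma:complete} guarantees stabilization within some $R = O(n)$ rounds.

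Next I would bound the length (in steps) of each of those $R$ rounds using Lemma \ref{lemma:round}. Because the uniformly random scheduler picks interactions i.i.d., the distribution of the length of any single round is the standard coupon collector distribution on $|E|$ coupons, regardless of the history of previous rounds. Thus each round has expected length $O(|E| \log n) = O(n^2 \log n)$, and by linearity of expectation the total stabilization time has expectation $O(R \cdot n^2 \log n) = O(n^3 \log n)$.

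For the high-probability bound I would apply Lemma \ref{lemma:round} with $c = 2$: each individual round has length at most $O(n^2 \log n)$ with probability $1 - O(n^{-2})$, and a union bound over the $R = O(n)$ rounds shows that all of them simultaneously have this length with probability $1 - O(n \cdot n^{-2}) = 1 - O(1/n)$, which matches the definition of with high probability adopted in the paper. The only subtlety, and the one place where care is needed, is to justify that the tail bound of Lemma \ref{lemma:round} can be applied independently to each round in a sample path of the uniformly random scheduler; this follows because the scheduler is memoryless, so conditioning on the outcome of previous rounds does not alter the joint distribution of interactions inside the current round. Combining the two cases completes the proof.
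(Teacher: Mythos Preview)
Your proposal is correct and follows exactly the approach the paper uses: the paper's proof simply states that the theorem is derived from Lemmas \ref{lemma:round}, \ref{lemma:incomplete}, and \ref{lemma:complete}, and you have spelled out the straightforward case split, linearity-of-expectation argument, and union bound that this derivation entails.
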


\begin{algorithm}[!htb]
\caption{A protocol $\CIK$}
\label{al:k}
\setcounter{AlgoLine}{0}
\nonl\when{$a, b \in V$ interacts with $a$ as initiator and $b$ as responder}{
    \uIf{$a \in \vl \land b \in \vl$}{        
        $a.\cnt \gets a.\cnt + b.\cnt$\;
        $(b.\leader,b.\cnt) \gets (F,0)$\;    
        \If{$a.\cnt = n$}{
            $a.\phase \gets 1.5$\;
        }
    }\uElseIf{$a.\leader \in \vl \cap V_{1.5} \land b.\group = k$}{
        $a.\cnt \gets a.\cnt - 1$\;
        $b.\group \gets a.\cnt \bmod k$\;
        \If{$a.\cnt < k$}{
            $(b.\leader, b.\phase) \gets (L, 2)$\;
        }
        \If{$a.\cnt = 1$}{
            $(a.\leader, a.\phase, a.\cnt, a.\group) \gets (L, 2, 0, 0)$\;
        }
    }\uElseIf{$a \in \vl \cap V_2 \land a.\mode[a.\group] = b.\mode[a.\group]$}{
        $a.\cnt \gets a.\cnt+1$\;        
        $b.\mode[a.\group] \gets 1-b.\mode[a.\group]$\;                
        \If{$a.\cnt = n-1$}{
            $(a.\phase,a.\cnt,a.\mode[a.\group])\gets(3,1,1-a.\mode[a.\group])$\;
        }
    }\uElseIf{$a \in \vl \cap V_3 \land b \in V_1 \land a.\group = 
    b.\group$}{
        $a.\leader \gets F$\;
        $(b.\leader,b.\phase) \gets (L,2)$\;
    }        
    \uElseIf{$a \in V_3 \land b \in V_3 \land a.\cnt > 0 \land b.\cnt > 0$}{
        $a.\cnt \gets a.\cnt + b.\cnt$\;
        $b.\cnt \gets 0$\;
        \If{$a.\cnt = n$}{
            $a.\phase \gets 4$
        }
    }
    \ElseIf{$a \in V_4$}{
        $b.\phase \gets 4$\;
    }
}
\end{algorithm}

\subsection{Protocol \texorpdfstring{$\CIK$}{CIW\_{n,k}}}
\label{sec:cik}
This section presents $\CIK$, a generalized version of $\CIW_n$ that incorporates a design parameter $k$, where $1 \leq k \leq n$.

The time complexity of $\CIW_n$ is primarily determined by the degree recognition process, where $n$ agents sequentially determine whether their out-degrees are $n-1$, requiring $O(n)$ rounds. We reduce the duration of this period by dividing the population into $k$ groups, nearly equally.
Specifically, after the unique leader $\ell$ is elected, $\ell$ assigns group identifiers to all agents as follows:
\begin{itemize}
\item Each agent $a$ maintains a variable $a.\group \in \{0,1,\dots,k\}$ to indicate the group to which it belongs. The initial value of this variable is $k$, which corresponds to the \emph{null} group.
\item When $\ell$ becomes the unique leader, $\ell.\cnt = n$ holds. Thereafter, whenever $\ell$ encounters an agent $b$ with $b.\group = k$, $\ell$ decrements $\ell.\cnt$ by one and assigns $(\ell.\cnt \bmod k)$ to $b.\group$.
In parallel, if $\ell.\cnt < k$ after decrementing, $\ell$ reverts the opposing agent $b$ to a leader, \ie setting $b.\leader$ to $L$.
\item When $\ell.\cnt$ reaches $1$, it recognizes that it has already assigned group identifiers to all agents except itself. Then, $\ell$ assigns $0$ to its $\group$ and completes the process of assigning group identifiers.
\end{itemize}
Through this process, the population is divided into $k$ groups, each consisting of either $\lfloor n/k\rfloor$ or $\lceil n/k \rceil$ agents, with exactly one leader per group. The leader of each group $g \in \{0,1,\dots,k-1\}$ counts the number of its out-neighbors in the same manner as $\CIW_n$, but in parallel. Unlike $\CIW_n$, in protocol $\CIK$, the variable $a.\mode$ is an array of size $k$: $a.\mode[g] \in \{0,1\}$ is used for counting the number of out-neighbors in group $g$. The remaining parts of the protocols $\CIW_n$ and $\CIK$ are identical.

% Since $\mode$ is the binary array with size $k$, variable $a.\mode$ requires $2^k$ states, 
% while $\cnt$ and $\group$ require $n+1$ and $k+1$ states, respectively.
% All other variables requires only $O(1)$ states.
% Thus, $\CIK$ requires $O(nk2^k)$ states per agent.
% At the cost of increased states, $\CIK$ reduces the time complexity from $O(n^3 \log n)$
% to $O((n^3/k) \log n)$ steps both in expectation and with high probability because
% each group consists of at most $\lceil n/k \rceil$ agents, thus $O(n/k)$ rounds are sufficient
% to complete the out-degree counting process of all agents. 

In protocol $\CIW_{n,k}$, each agent $a\in V$ maintains five variables:
$a.\leader \in \{F,L\}$, $a.\phase \in \{1,1.5,2,3,4\}$, $a.\mode \in \{0,1\}^{k}$, $a.\group \in \{0,1,...,k\}$, and $a.\cnt \in \{0,1,...,n\}$, resulting in exactly $10(n+1)(k+1)2^k=O(nk2^k)$ states per agent.
Initially, $a.\leader = L$, $a.\phase = 1$, $a.\mode[i] = 0 (i\in[0,k-1])$, $a.\group=k$, $a.\cnt = 1$.
Note that all of elements of $a.\mode$ are $0$ since $a.\mode$ is a binary array with size $k$.

The roles of values, output, and set of definitions are the same as in $\CIW_n$, with the following exceptions:
(i) There is a new phase, 1.5. We denote the set of agents in phase 1.5 by $V_{1.5} \subset V$, where an agent $a$ is in phase 1.5 if $a.\phase = 1.5$.
(ii) Each group has a group leader. An agent $a$ is the leader of group $i$ if $a \in \vl \land a.\group = i$.

Our protocol $\CIW_{n,k}$, presented in Algorithm~\ref{al:k}, can be summarized as follows:
\begin{itemize}
    \item Agents in phase 1 participate in leader election (lines 1--5).
    Once a unique leader $\ell$ is elected, it advances to phase 1.5.
    \item In phase 1.5, $\ell$ divides all agents into $k$ uniform groups (lines 6--8) and appoints a unique group leader $i\in[0,k-1]$ for each group (lines 9--12). Once each group leader is appointed, the leader advances to phase 2.
    \item The group $i$ leader in phase 2 counts the number of its out-neighbors (lines 13--15). It moves to phase 3 upon determining that the number of its out-degrees equals $n-1$ (lines 16--17). At the next interaction where this group leader meets a same group follower in phase 1, the leadership status is transferred to the follower, who then moves to phase 2 and begins their own count (line 18--20).
    \item Agents in phase 3 count the number of agents in phase 3 (lines 21--23). If some agent observes that $|V_3| = n$, indicating that the out--degrees of all agents are $n-1$, it transitions to phase 4 (line 24--25).
    \item Once an agent in phase 4 appears, all agents proceed to phase 4 via the well-known one-way epidemic protocol \cite{Epidemic} (line 26--27):
    In each interaction where an agent $a \in V_4$ is the initiator and an agent $b \in V_3$ is the responder, $b$ transitions to phase 4.
\end{itemize}

In what follows, we describe how to elect a leader, divide into $k$ groups, count the out--degree, and determine if $|V_3|=n$, using the variable $\cnt$.
Note that $\cnt$ is utilized in phase 1, 1.5, 2, and 3 for different purposes.
That re--use of $\cnt$ contributes to reducing the number of states in $\CIW_{n,k}$ from $O(n^4k2^k)$ to $O(nk2^k)$.

Leader election is almost same as $\CIW_n$.
The only difference is that when a unique leader advances to phase 2, it resets $\cnt$ to $n$ rather than to zero.

A unique leader $\ell$ in phase 1.5 divides all agents into $k$ uniform groups.
Initially all agents' $\group$ values are set to $k$, indicating that they have not yet been assigned to a group.
When $\ell$ interacts with an agent whose $\group$ is $k$, $\ell$ decreases its $\cnt$ by 1 and set the responder's $\group$ to $\ell.\cnt \bmod k$.
If $\ell.\cnt$ is less than $k$, the responder becomes a new group leader and advances to phase 2.
When $\ell.\cnt$ equals to one, $\ell$ set its own $\group$ to zero, and advances to phase 2, resetting $\ell.\cnt$ to zero.
At this stage, the group leader 0 is $\ell$, ensuring there is exactly one leader in each group and the uniform partition into $k$ groups is completed.

A group $i$ leader in phase 2 determines if its out--degree is $n-1$.
When a new group $i$ leader $a \in V_2$ emerges, either from the $k$ partition or from the transfer of leadership status, $a.\cnt$ is 0, and all agents share the same $\mode[i]$ value, either 0 or 1.
Each time $a$, as the initiator, meets another agent $b$ with the same $\mode[i]$ value, it increments $a.\cnt$ by one and toggles $b.\mode$ from 0 to 1 or from 1 to 0. 
Thus, $a.\cnt$ reaches $n-1$ if and only if $a$ meets $n-1$ different out-neighbors, implying that $a$'s out-degree is $n-1$. 
Then, it transitions to phase 3, setting $\cnt$ to 1 and toggling its $\mode[i]$ value. 
Toggling $a.\mode[i]$ ensures that all agents share the same $\mode[i]$ value when $a$ meets a same group follower $b$ in phase 1 and transfers its leadership status to $b$, making $b$ a new group $i$ leader in phase 2.

Agents in phase 3 determine whether $|V_3|=n$.
This process is exactly same as in $\CIW_n$.

\begin{lemma}\label{lemma:kv34}
$\sum_{a\in V_3 \cup V_4} a.\cnt = |V_3 \cup V_4|$
always holds in any execution of $\CIW_{n,k}$.
\end{lemma}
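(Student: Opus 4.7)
The plan is to reproduce the argument of Lemma \ref{lemma:v34} essentially verbatim, since the mechanics governing phases 3 and 4 in $\CIK$ coincide with those of $\CIW_n$; the additional machinery for the group-partition phase 1.5 and for the per-group $\mode$ array never touches an agent already in $V_3 \cup V_4$. I would argue by induction on the length of an execution, taking the induction hypothesis to be the claimed equality.

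For the base case, the initial configuration $\mathcal{I}$ has every agent in phase 1, so $V_3 \cup V_4 = \emptyset$ and the equality reads $0 = 0$. For the inductive step I would walk through each branch of Algorithm \ref{al:k} and verify that the quantity $\sum_{a \in V_3 \cup V_4} a.\cnt - |V_3 \cup V_4|$ is preserved. The only transitions that can change $V_3 \cup V_4$ or a $\cnt$ field of one of its members are:
\begin{itemize}
\item Line 17 (phase $2 \to 3$): an agent $a$ joins $V_3$ and simultaneously sets $a.\cnt \gets 1$, so both sides of the equality grow by one.
\item Lines 21--23 (two agents in $V_3$ interact with positive counts): $a.\cnt$ and $b.\cnt$ are replaced by $a.\cnt + b.\cnt$ and $0$; both $|V_3 \cup V_4|$ and the total sum are unchanged.
\item Line 25 (phase $3 \to 4$, self-promotion): $a$ leaves $V_3$ but enters $V_4$, and $a.\cnt$ is not touched; both sides are unchanged.
\item Line 27 (phase $3 \to 4$ via epidemic): $b$ leaves $V_3$ but enters $V_4$, and $b.\cnt$ is not modified; both sides are unchanged.
\end{itemize}
All remaining branches (leader election in phase 1, group assignment in phase 1.5, out-degree counting in phase 2, and leadership transfer from phase 3 to a phase-1 follower in lines 18--20) either act strictly outside $V_3 \cup V_4$ or, in the case of lines 18--20, only toggle $\leader$ of an agent that is already in $V_3$ without altering its $\cnt$. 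In particular, the phase-1.5 rule decrements the unique leader's $\cnt$, but that leader is in $V_{1.5}$, not in $V_3 \cup V_4$, so the invariant is unaffected.

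The main thing to be careful about is the leadership-transfer interaction (lines 18--20): the initiator $a \in V_3$ becomes a follower but remains in phase 3, while the responder $b$ moves from phase 1 to phase 2; neither $a.\cnt$ nor the membership of $V_3 \cup V_4$ changes, so the invariant survives. Once every rule has been checked in this way, the inductive step is complete and the lemma follows.
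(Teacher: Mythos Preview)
Your proposal is correct and follows essentially the same approach as the paper: the paper's proof simply states that it is identical to that of Lemma~\ref{lemma:v34}, which in turn rests on exactly the three observations you spell out (entry into phase~3 sets $\cnt$ to $1$, phase-3 interactions preserve the sum, phase-4 agents never change $\cnt$). Your induction with an explicit case analysis of Algorithm~\ref{al:k} is a more detailed rendering of the same argument; the one point worth making explicit is that in line~27 the responder $b$ must already lie in $V_3\cup V_4$, which follows from the inductive hypothesis since an agent first enters $V_4$ only via line~25 with $\cnt=n$, forcing $|V_3\cup V_4|=n$ at that moment.
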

\begin{proof}
The proof is identical to that of Lemma~\ref{lemma:v34}.
\end{proof}

\begin{theorem}\label{k_teiri}
Given the exact population size $n$ and a design parameter $k$ with $1 \le k \le n$,
protocol $\CIK$ solves the complete graph identification problem under weak fairness, using $O(n k 2^k)$ states.
\end{theorem}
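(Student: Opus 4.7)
The plan is to adapt the proof of Theorem~\ref{n_teiri} by inserting one additional step that handles the new partitioning phase~1.5. Phase~1 (lines 1--5) is unchanged from $\CIW_n$, so the same leader-election argument will yield a unique leader $\ell$ with $\ell.\cnt = n$ at the moment it enters phase~1.5. For phase~1.5 I would then track the decrements of $\ell.\cnt$: each interaction of $\ell$ with a follower whose $\group$ value is still $k$ decrements $\ell.\cnt$ and assigns that follower to group $\ell.\cnt \bmod k$, while the promotion guard $\ell.\cnt < k$ creates a new group leader. As $\ell.\cnt$ descends through $n-1, n-2, \ldots, 1$, each residue in $\{1, 2, \ldots, k-1\}$ is hit exactly once under the promotion guard, and when $\ell.\cnt$ reaches $1$ the unique leader promotes itself into group~$0$. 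Hence phase~1.5 terminates with $V$ partitioned into $k$ groups of size $\lfloor n/k\rfloor$ or $\lceil n/k \rceil$, each with a single group leader.

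For the case that $G$ is not complete, I would follow the argument in the proof of Theorem~\ref{n_teiri}. Let $b$ be an agent with out-degree at most $n-2$. The only transitions into phase~3 pass through phase~2 as a group leader (either by promotion in phase~1.5 or by intra-group rotation in the lines guarded by $a \in \vl \cap V_3 \land b \in V_1 \land a.\group = b.\group$), and the $\mode[b.\group]$-toggling rule in phase~2 ensures that $b.\cnt$ can increase at most once per distinct out-neighbor of $b$. Hence $b.\cnt$ never reaches $n-1$, so $b$ never enters phase~3. Thus $|V_3| < n$ throughout the execution, and Lemma~\ref{lemma:kv34} guarantees that no agent ever observes $\cnt = n$ in phase~3, so no agent reaches phase~4 and all agents output $\no$.

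For the complete case I would apply weak fairness phase by phase: leader election completes; $\ell$ eventually meets every other agent and fixes its group; each of the $k$ group leaders then meets its $n-1$ out-neighbors in phase~2 and reaches $\cnt = n-1$; intra-group rotation (via the $a.\group = b.\group$ guard) brings every remaining agent through phase~2 into phase~3; Lemma~\ref{lemma:kv34} together with weak fairness then forces some agent in phase~3 to observe $\cnt = n$ and enter phase~4; finally the one-way epidemic floods phase~4 throughout $V$. The state count $10(n+1)(k+1)2^k = O(n k 2^k)$ is immediate from the variable domains. The main obstacle I anticipate is confirming that the $k$ parallel counting processes in phase~2 do not interfere across groups; this is exactly what the per-group coordinate $a.\mode[a.\group]$ achieves, since each group's leader toggles only its own coordinate of the $\mode$ array, and the counting argument of $\CIW_n$ then applies independently to each of the $k$ groups.
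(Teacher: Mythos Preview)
Your proposal is correct and follows essentially the same route as the paper's proof: both argue the phase-1 invariant $\sum_{a\in V_1}a.\cnt=n$ to get a unique leader, trace the $\cnt$ decrements in phase~1.5 to obtain $k$ groups of size $\lfloor n/k\rfloor$ or $\lceil n/k\rceil$ with one leader each, handle the non-complete case by bounding $b.\cnt$ through the $\mode$-toggling argument and invoking Lemma~\ref{lemma:kv34}, and handle the complete case by pushing every agent through phases~2--4 under weak fairness while citing the per-coordinate $\mode[g]$ array for non-interference between groups. Your exposition is, if anything, slightly more explicit about the residue pattern under the promotion guard and about why entry into phase~3 must pass through phase~2, but the decomposition and key lemmas are the same.
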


\begin{proof}
While all agents are in phase 1, the invariant $\sum_{a \in V_1} a.\cnt = n$ always holds. Thus, only one leader may have $a.\cnt = n$ and transition to phase 1.5, with all follower agents having a zero $\cnt$ value. 
If the communication graph is complete, weak fairness guarantees that a unique leader is eventually elected. 
If not, the leader election may remain incomplete, but this does not matter to us. In both scenarios, once an agent transitions to phase 1.5, there is always exactly one leader.

Suppose that the communication graph is \emph{not} complete. 
Then, there exists at least one agent, say $b$, with an out-degree of at most $n-2$.
Even if $b$ transitions from phase 1 to phase 2 or phase 1.5 to phase 2, its $\cnt$ value is zero at that time. 
Each out-neighbor of $b$ can increase $b.\cnt$ by at most one, ensuring that $b.\cnt$ never reaches $n-1$. 
Therefore, $b$ does not advance to phase 3, and consequently, $|V_3| < n$ always holds. 
By Lemma \ref{lemma:kv34}, this implies that no agent observes $\cnt = n$, and thus no agent transitions to phase 4. 
Therefore, every agent always outputs $\no$ from the beginning of an execution if the communication graph is not complete.

Next, suppose that the communication graph is complete.
As mentioned previously, the unique leader is eventually elected. 
When this leader, say $\ell$, is elected and moves to phase 1.5 with setting $\cnt$ to $n$.
Weak fairness ensures that $\ell$ meets its $n-1$ out--neighbors as the initiator.
Thus, all follower agents are assigned to some groups.
By periodicity of remainder, the number of each groups' agents is either $\lfloor n/k\rfloor$ or $\lceil n/k \rceil$.
Therefore, uniform $k$ partition is completed.
$\ell$ makes a new group leader if and only if $\ell.\cnt$ less than $k$.
Thus, in the end of phase 1.5, each group has a exact one leader.
When a group $i$ leader in phase 2, say this leader $\ell_i$,
its $\cnt$ value is zero and the $\mode[i]$ values of all agents are zero. 
Weak fairness ensures that $\ell_i$ meets its $n-1$ out-neighbors as the initiator,
changing the $\mode[i]$ values of all agents including the group $i$ leader to $1$,
and then moves to phase 3.
The leader $\ell_i$ gives the leadership status to another same group agent, say $\ell'_i$,
in the next interaction.
When $\ell'_i$ becomes a leader, $\ell'.\cnt=0$ and the $\mode[i]$ values of all agents are 1.
Similarly, weak fairness ensures that $\ell'_i.\cnt$ reaches $n-1$, the $\mode[i]$ values of all agents go back to $0$, and $\ell'_i$ transitions to phase 3.
Repeating this process, all agents eventually enter phase 3.
Since the group $i$ only use $\mode[i]$, no interference of $\mode$ happens in parallel execution of out--degree counting. 
By Lemma \ref{lemma:kv34}, weak fairness ensures that the event that some agent $a$ in phase 3 observes $a.\cnt = n$ eventually occurs, and $a$ enters phase 4.
After that, all agents enter phase 4 by meeting $a$ or another agent in phase 4,
by weak fairness.
\end{proof}

\begin{lemma}
\label{lemma:k:incomplete}
If the communication graph is not complete,
every execution of $\CIW_{n,k}$ stabilizes in zero time,
\ie the initial configuration $\mathcal{I}$ is stable.
\end{lemma}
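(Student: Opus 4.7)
The plan is to mirror the proof of Lemma \ref{lemma:incomplete} almost verbatim, since the output convention is the same: an agent outputs $\yes$ iff it is in phase $4$, and in the initial configuration $\mathcal{I}$ every agent is in phase $1$ and therefore outputs $\no$. Consequently, to show $\mathcal{I}$ is stable it suffices to show that no reachable configuration contains any agent in phase $4$.

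First, I would invoke the reasoning already established inside the proof of Theorem~\ref{k_teiri}: if the communication graph is not complete then some agent $b$ has out-degree at most $n-2$. Regardless of whether $b$ enters phase $2$ via the initial leader election (from phase $1$) or via the group-assignment step (from phase $1.5$), its counter $b.\cnt$ is reset to $0$ at the moment it enters phase $2$, and can increase by at most one for each distinct out-neighbour it meets as initiator. Hence $b.\cnt \le n-2 < n-1$ throughout, so $b$ never triggers the transition on line~17 of Algorithm~\ref{al:k} and never enters phase $3$. Thus $|V_3| \le n-1$ holds in every reachable configuration.

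Next, by Lemma~\ref{lemma:kv34}, $\sum_{a\in V_3\cup V_4} a.\cnt = |V_3\cup V_4| \le n-1$ in every reachable configuration (using the fact that $b \notin V_3 \cup V_4$). Therefore no agent in $V_3$ can ever observe $a.\cnt = n$, so the transition on lines~24--25 never fires, and no agent enters phase $4$ through this route. Since the only other way to reach phase $4$ is the epidemic rule on lines~26--27, which requires an already existing phase-$4$ agent as initiator, and since phase $4$ is empty at time $0$, a straightforward induction on the interaction index shows $V_4 = \emptyset$ in every reachable configuration.

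Putting these observations together, every agent outputs $\no$ in every configuration reachable from $\mathcal{I}$, which matches the output each agent already has in $\mathcal{I}$. Hence $\mathcal{I}$ is stable, establishing the lemma. The only mildly subtle point is handling the case where $b$ acquires a leader status via the phase-$1.5$ re-designation (line~10) rather than via the original leader election, but the key invariant---that $b.\cnt$ is zeroed on entry to phase~$2$ and strictly bounded by $b$'s out-degree thereafter---applies uniformly, so no separate case analysis is actually required.
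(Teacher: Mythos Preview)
Your proposal is correct and follows essentially the same approach as the paper: the paper's own proof simply refers back to the argument in Theorem~\ref{k_teiri} showing no agent enters phase~4 on an incomplete graph, which is exactly the reasoning you spell out in more detail. Your expansion (invoking Lemma~\ref{lemma:kv34} and the inductive observation that the epidemic rule cannot fire without a seed) matches what the theorem's proof already establishes, so no substantive deviation is present.
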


\begin{proof}
As shown in the proof of Theorem \ref{k_teiri},
no agent enters phase 4 if the communication graph is not complete,
which gives the lemma.
\end{proof}

\begin{lemma}\label{lemma:k:complete}
If the communication graph is complete, every execution $\Xi$ of $\CIW_{n,k}$ stabilizes in $O(n/k)$ rounds under weak fairness.  
\end{lemma}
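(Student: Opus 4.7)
The plan is to mirror the analysis of Lemma~\ref{lemma:complete} while exploiting the $k$-fold parallelism granted by the group partition. I would decompose an arbitrary execution of $\CIW_{n,k}$ into its four phases and bound each of them separately in terms of rounds.

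First, phases~$1$ and~$1.5$ each take at most one round. Within a single round every ordered pair of agents interacts at least once, so any two initial leaders must meet during round~$1$, leaving at most one leader by the round's end---this is the same observation already used for leader election in Lemma~\ref{lemma:complete}. In round~$2$ the unique leader $\ell$ (now holding $\ell.\cnt=n$) meets every remaining agent as initiator and fires lines~$6$--$12$ on each, assigning group identifiers to all followers and promoting the last $k-1$ assignees together with $\ell$ itself to phase-$2$ group leaders. Hence by the end of round~$2$, each of the $k$ groups has exactly one leader in phase~$2$, and all agents share the same value of $\mode[i]$ for every $i$.

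Second, I would analyse phase~$2$ group by group. Fix group~$i$ of size $m_i\le\lceil n/k\rceil$ and let $T_j$ denote the round by which the $j$-th group-$i$ leader has entered phase~$2$. The key inequality is $T_{j+1}\le T_j+2$: during round $T_j+1$ the current group-$i$ leader $a$ meets every out-neighbor as initiator at least once, and the invariant ``$a.\mode[i]=b.\mode[i]$ for every $b$ not yet visited in this counting pass'' forces $a.\cnt$ up to $n-1$, after which line~$17$ moves $a$ to phase~$3$ while flipping $a.\mode[i]$ to restore uniformity; during round $T_j+2$, so long as any phase-$1$ follower of group~$i$ still exists, $a$ meets one and lines~$19$--$20$ transfer leadership. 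Since distinct groups write to disjoint slots of the $\mode$ array, the $k$ groups evolve independently, so by round $2m_i+1=O(n/k)$ every member of group~$i$ has reached phase~$3$, and in parallel every group finishes phase~$2$ within $O(n/k)$ rounds.

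Third, once all $n$ agents reside in phase~$3$, Lemma~\ref{lemma:kv34} gives $\sum_{a\in V_3}a.\cnt=n$, and the concluding argument of Lemma~\ref{lemma:complete} applies verbatim: one further round suffices for lines~$21$--$23$ to drive some agent's $\cnt$ to $n$ and promote it to phase~$4$, and one more round completes the one-way epidemic of lines~$26$--$27$. Summing the four contributions yields $O(n/k)$ rounds in total. The delicate step is the parallel phase-$2$ argument: I must verify that simultaneous counting passes in different groups neither falsify any $\mode[i]$ invariant nor allow phase-$3$ aggregation to drag some $a.\cnt$ to $n$ prematurely. The per-group slot design handles the former, and Lemma~\ref{lemma:kv34} combined with the observation that $|V_3\cup V_4|<n$ while any phase-$1$ follower remains handles the latter. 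Making these airtight across every inter-group interaction is the only part of the plan needing careful phrasing; the rest is a direct adaptation of Lemma~\ref{lemma:complete}.
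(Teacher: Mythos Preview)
Your proposal is correct and follows essentially the same decomposition as the paper's proof: one round for leader election, one round for the $k$-partition in phase~$1.5$, then the two-rounds-per-leader argument run in parallel across the $k$ groups to give $2\lceil n/k\rceil+O(1)$ rounds for phase~$2$, and finally one round each for phase-$3$ aggregation and the phase-$4$ epidemic. The only difference is that you spell out the non-interference of the per-group $\mode[i]$ slots and the role of Lemma~\ref{lemma:kv34} in ruling out premature promotion to phase~$4$, whereas the paper leaves these implicit.
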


\begin{proof}
The leader election will be completed before every pair of agents meets, thus finishing in one round.
The $k$-partition will be completed once the unique leader has met all other agents, also completing in one round.
Thereafter, in every two rounds, at least one agent in each group recognizes that its out-degree is $n-1$, advances to phase 3,
and transfers the leadership status to another agent within the same group who is in phase 1 (if one exists).
Therefore, all agents reach phase 3 within $2\lceil n/k \rceil+2$ rounds.
An additional one round is sufficient for some agent to detect that $|V_3|=n$ and move to phase 4, after which all agents enter phase 4 in one subsequent round.
\end{proof}

The following theorem is derived from Lemma~\ref{lemma:round}, ~\ref{lemma:k:incomplete}, and ~\ref{lemma:k:complete}.

\begin{theorem}\label{kjikan}
Any execution of $\CIK$ stabilizes in $O((n^3/k) \log n)$ steps, both in expectation and with high probability,
under the uniformly random scheduler.
\end{theorem}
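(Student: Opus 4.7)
The plan is to combine the three cited lemmas in the standard way. First, I would split into two cases according to whether the communication graph $G$ is complete. If $G$ is not complete, Lemma~\ref{lemma:k:incomplete} tells us the initial configuration is already stable, so the time bound trivially holds; the rest of the proof deals exclusively with the complete case.

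In the complete case, I would observe that the communication graph has $|E| = n(n-1) = \Theta(n^2)$ ordered edges, and that the uniformly random scheduler produces weakly fair executions almost surely, so Lemma~\ref{lemma:k:complete} applies and guarantees stabilization within $R = O(n/k)$ rounds. Combining this with Lemma~\ref{lemma:round}, each round has expected length $O(|E|\log n) = O(n^2 \log n)$, so by linearity of expectation the total expected number of interactions until stabilization is $O(R \cdot n^2 \log n) = O((n^3/k)\log n)$, which yields the bound in expectation.

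For the high-probability bound, I would apply Lemma~\ref{lemma:round} with a constant $c \geq 2$ to get that any single round has length $O(n^2 \log n)$ with probability $1 - O(n^{-c})$. A union bound over all $R = O(n/k) \leq n$ rounds then gives that every round has length $O(n^2 \log n)$ with probability $1 - O(n \cdot n^{-c}) = 1 - O(n^{1-c})$, which is $1 - O(1/n)$ for $c = 2$. Summing the per-round bounds gives total length $O((n^3/k)\log n)$ with high probability.

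There is no real obstacle here since all the heavy lifting is done by Lemmas~\ref{lemma:round}, \ref{lemma:k:incomplete}, and \ref{lemma:k:complete}; the only point requiring mild care is the union bound, where one must choose the constant $c$ in Lemma~\ref{lemma:round} large enough (e.g.\ $c = 2$) to absorb the factor of $R$ and still land at the claimed $1 - O(1/n)$ probability.
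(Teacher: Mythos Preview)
Your proposal is correct and follows exactly the same approach as the paper, which simply states that the theorem is derived from Lemmas~\ref{lemma:round}, \ref{lemma:k:incomplete}, and \ref{lemma:k:complete}. You have merely spelled out the combination (case split, linearity of expectation, and union bound with an appropriate choice of $c$) in more detail than the paper itself does.
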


\begin{algorithm}[htb]
\caption{A protocol $\CIG$}
\label{al:P}
\setcounter{AlgoLine}{0}
\Fn{$\RReset(a)$}{
    $(a.\leader, a.\phase, a.\mode, a.\cnt) \gets (L, 1, 0, 1)$\;
}
\nonl\when{ $a, b \in V$ interacts with $a$ as initiator and $b$ as responder}{
    \uIf{$a.\token = \top \land b.\token = \top$}{
        $b.\token \gets \bot$\;
        $a.\sz \gets b.\sz \gets a.\sz + b.\sz$\;
        $\RReset(a)$\;
        $\RReset(b)$\;
    }\uElseIf{$\exists (x,y)\in\{(a,b),(b,a)\}: x.\token = \top \land x.\sz \le y.\sz$}{
        $x.\token \leftrightarrow y.\token$\;
        $x.\sz \leftrightarrow y.\sz$\;
    }\ElseIf{$\exists (x,y)\in\{(a,b),(b,a)\}: x.\sz > y.\sz$}{
        $x.\token \leftrightarrow y.\token$\;
        $y.\sz \gets x.\sz$\;
        $\RReset(y)$\;
    }
    \If{$a.\sz = b.\sz$}{
        Execute protocol $\CIW_{a.\sz}$.\;
        If $\cnt$ exceeds $\sz$ during the execution of $\CIW_{a.\sz}$, set $\cnt$ to $\sz$.\;
    }
}
\nonl The notation $x \leftrightarrow y$ means that the values of $x$ and $y$ are swapped.
\end{algorithm}

\section{Complete Graph Identification under Global Fairness}\label{sec:cig}
This section proposes a protocol $\CIG$ for identifying complete graphs with no initial knowledge under global fairness using $32n(n+1)=O(n^2)$ states. 

This protocol uses $\CIW_n$ as a submodule. 
However, we do not assume the knowledge of exact $n$ here, whereas $\CIW_n$ requires knowledge of $n$ in advance. 
In $\CIG$, the agents try to compute the number of agents $n$
and store the estimated value of $n$ on a variable $\sz$.
As we will see later, for any agent $a \in V$, 
$a.\sz$ is monotonically non-decreasing and eventually reaches $n$.
Whenever two agents with the same $\sz$ value meets,
they execute $\CIW_{\sz}$, \ie updates the four variables $\leader$, $\phase$, $\cnt$, and $\mode$ according to
the transition function of $\CIW_{\sz}$.
Each time $a.\sz$ is updated, all variables of $\CIW_n$ in agent $a$ is reset by their initial values.

We compute the number of agents as follows:
\begin{itemize}
    \item An agent $a$ maintains two variables: $a.\token \in \{\bot,\top\}$ and $a.\sz \in \mathbb{N}$. Initially, $a.\token = \top$ and $a.\sz = 1$. An agent $a$ is said to have a token if $a.\token = \top$.
    \item When two agents $a$ and $b$, each with a token, interact, the two tokens merge. This is done by setting $b.\token$ to $\bot$ and simultaneously increasing $a.\sz$ by $b.\sz$. No other rules update the $\sz$ value of an agent with a token. Therefore, the invariant $\sum_{a \in V, a.\token = \top} a.\sz = n$ always holds.
    This token-merge process guarantees that the population eventually reaches a configuration where exactly one agent retains a token, denoted as $a_T$. Thereafter, $a_T.\sz = n$ always holds, thanks to the aforementioned invariant.
    \item Whenever an agent $a$ without a token encounters another agent $b$ with $a.\sz < b.\sz$, $a.\sz$ is updated to $b.\sz$. Thus, the maximum value $a_T.\sz = n$ eventually propagates to the entire population. Consequently, all agents will have $\sz = n$, although they do not know whether this value equals to the actual number of nodes, meaning they cannot detect the termination of the size computation.
\end{itemize}

In the protocol $\CIG$, each agent $a\in V$ maintains six  variables:
$a.\leader \in \{F,L\}$, $a.\phase \in \{1,2,3,4\}$, $a.\mode \in \{0,1\}$, $a.\cnt \in \mathbb{N}$, $a.\token \in \{\bot, \top\}$, and $a.\sz \in \mathbb{N}$.
As we discuss later, for any agent $a\in V$, it is always guaranteed that $a.\cnt \le n$ and $a.\sz \le n$.
Thus, each agent utilizes $32n(n+1)$ states.
Initially, $a.\leader = L$, $a.\phase = 1$, $a.\mode = 0$, $a.\cnt = 1$, $a.\token = \top$, and $a.\sz = 1$.
The four variables $a.\leader$, $a.\phase$, $a.\mode$, and $a.\cnt$ are the same as those used in $\CIW_n$, and we refer to these as \emph{$\CIW$'s variables}.
An agent $a$ is said to have a \emph{token} if $a.\token = \top$; otherwise, it is termed that $a$ does not have a token.
The \emph{token size} refers to the $\sz$ of the agent holding the token.
Each agent outputs $\yes$ if and only if the agent is in phase 4.

It is worth mentioning that, unlike $\CIW_n$, during an execution of $\CIG$, an agent $a$ may enter phase 4 even if the communication graph $G$ is \emph{not} complete. However, this false positive error occurs only if $a$ underestimates the population size. Agent $a$ resets all of $\CIW$'s variables to their initial values whenever the estimated population size, $a.\sz$, is updated. Therefore, the objectives here are to ensure that:
(i) every agent $a$ eventually has $a.\sz = n$,
(ii) every agent with $a.\sz = n$ reaches phase 4 if the communication graph is complete,
(iii) no agent with $a.\sz = n$ reaches phase 4 if the communication graph is not complete.

Our protocol $\CIG$, presented in Algorithm~\ref{al:P}, can be summarized as follows.
\begin{itemize}
    \item Initially, all agents possess one token each, with a size of 1. 
    Tokens (and their $\sz$ values) move randomly through agent interactions (lines 4,9, and 12).
    When two tokens meet during an interaction, the responder discards its token, and the initiator's $\sz$ absorbs the responder's $\sz$ (lines 2--5).
    \item When a token's size reaches $n$ for the first time, this size information is propagated to all agents.
    Subsequently, all agents reset their $\CIW$'s variables to the initial state and execute protocol $\CIW_n$ (line 16). However, as agents do not know the value of $n$ in this protocol, they cannot directly determine when its $\sz$ value reaches $n$.
    Therefore, each agent $a$ resets its $\CIW$'s variables to the initial values whenever $a.\sz$ is updated (\ie increments) (lines 3--14).
    Each time two agents meet, they execute $\CIW_n$ if and only if their sizes are equal (lines 15--16).
\end{itemize}

In what follows, we describe how to count the population size, and explain why the $\CIG$ protocol operates correctly.
It is important to note that $\sz$ is used to denote two different concepts: the size of an agent and the size of a token.
When agents possess tokens, it becomes necessary to account for both the agent's size and the token's size.
The initial counting process requires $O(n^2)$ states, but we reduce this to $O(n)$ by maintaining only the token's size or the agent's size, leveraging the fact that the agent's size and the token's size are equal when agents possess tokens.

Tokens move randomly through interactions as described below.

When both the initiator $a$ and the responder $b$  have tokens, the responder's token is disappeared.
Both agents' sizes are set to $a.\sz + b.\sz$, and both $\CIW$'s variables are reset to the initial state (lines 3--7).

When only one of the agents, either the initiator or the responder, holds a token, we denote the agent holding the token as $x$ and the agent without a token as $y$. 
If $x.\sz$ is less than or equal to $y.\sz$, the token moves from $x$ to $y$, and their sizes are swapped (lines 8--10).
During this process, agents do not reset their values.
% This ensures that for any agent $a\in V$, $a.\cnt \le n$ is always guaranteed due to resetting.
When $x.\sz$ is greater than $y.\sz$, the token moves from $x$ to $y$, $y.\sz$ is set to $x.\sz$, and $y$'s $\CIW$'s variables are reset to the initial state (lines 11--14).

When neither the initiator $a$ nor the responder $b$ has tokens, the following actions occur: If $a.\sz$ is larger than $b.\sz$, then $b.\sz$ is set to $a.\sz$ and $b$'s $\CIW$'s variables are reset to the initial state. Conversely, if $b.\sz$ is larger than $a.\sz$, then $a.\sz$ is set to $b.\sz$ and $a$'s $\CIW$'s variables are reset to the initial state (lines 11--14).

Eventually, only one agent retains a token, the size of the token becomes $n$, and all other agents' sizes also become $n$.

When the sizes of the initiator $a$ and the responder $b$ are equal, the agents execute the protocol $\CIW_{a.\sz}$ (lines 15--17).
If $\cnt$ exceeds $\sz$ during the execution of $\CIW_{a.\sz}$, set $\cnt$ to $\sz$.

\begin{theorem}\label{P_teiri}
Protocol $\CIG$ solves the complete graph identification problem under global fairness, using $O(n^2)$ states.
\end{theorem}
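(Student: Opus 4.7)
My plan is to treat $\CIG$ as the composition of a size-computation subroutine (on the variables $\token$ and $\sz$) and a reset-aware execution of $\CIW_n$ on the remaining variables, arguing that both components converge under global fairness and together yield the correct output.

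First I would establish the $O(n^2)$ state bound via two invariants. The only rule that changes a token-holder's $\sz$ is the token merger, which preserves $\sum_{a \in V : a.\token = \top} a.\sz = n$, so $a.\sz \le n$ at every token-holder. A token-less agent can only overwrite its $\sz$ with another agent's $\sz$, so its value is bounded by the population-wide maximum of $\sz$, which never exceeds $n$. Combined with the explicit cap ensuring $a.\cnt \le a.\sz \le n$ and the constant-size domains of $\leader, \phase, \mode, \token$, this yields at most $32n(n+1) = O(n^2)$ reachable states per agent.

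Next I would show that the size computation stabilizes at $a.\sz = n$ for every $a$. The number of tokens is non-increasing (decreasing only when two tokens merge), and $\max_{a \in V} a.\sz$ is non-decreasing and bounded by $n$. Whenever at least two tokens remain, the weak connectivity of the communication graph lets me exhibit a sequence of legal interactions, using the token-swap and size-propagation rules, that brings two tokens onto the endpoints of a common arc, after which their merger reduces the token count by one. Global fairness forces every such reachable configuration to occur, driving the token count down to exactly one. By the invariant the surviving token has $\sz = n$, and a symmetric reachability argument propagates this value along the graph until every agent has $\sz = n$; from that point no $\sz$ value changes again.

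Once every agent permanently has $\sz = n$, no $\RReset$ is ever triggered and every interaction satisfies $a.\sz = b.\sz$, so the subsequent dynamics on the variables $\leader, \phase, \mode, \cnt$ coincide exactly with those of $\CIW_n$ (the $\cnt$-cap is vacuous because $\CIW_n$ itself maintains $\cnt \le n$). Reusing the invariants behind Theorem~\ref{n_teiri} and Lemma~\ref{lemma:incomplete}, most importantly that an agent with out-degree below $n-1$ can never reach phase~$3$ and hence $|V_3| < n$ forever when $G$ is incomplete, one verifies that from the current state a stable $\CIW_n$-configuration is reachable: all agents in phase~$4$ when $G$ is complete, and no agent ever entering phase~$4$ when $G$ is not. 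Global fairness of $\CIG$ then forces this stable configuration to be reached, and once reached it remains stable in $\CIG$ because neither $\sz$ nor any $\CIW$-variable can change further.

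The step I expect to be the main obstacle is the reachability claim for the size computation under global fairness: justifying that two tokens separated by arbitrary graph distance can always be funneled together via the asymmetric token-swap and size-propagation rules, whose behavior depends on which side currently carries the larger $\sz$. Any transient phase-$4$ decision made while some agent still had $\sz < n$ is annulled by the next $\RReset$, so such spurious decisions cannot obstruct stabilization, but formalizing the reachability step on a general weakly connected digraph will take the most care.
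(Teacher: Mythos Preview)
Your proposal is correct and mirrors the paper's own proof: bound $\sz$ and $\cnt$ by $n$ for the state count, argue that tokens merge and $\sz=n$ propagates under global fairness, then invoke Theorem~\ref{n_teiri} for the residual $\CIW_n$ run. Your anticipated obstacle dissolves on inspection of lines~8--14: whenever exactly one of the two interacting agents holds a token, the token \emph{always} passes to the other agent regardless of the $\sz$ comparison, so each token performs an unrestricted walk on $\hat{G}$ and two tokens can be steered together along any weakly-connecting path.
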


\begin{proof}
Global fairness ensures that tokens can visit all agents, allowing tokens to encounter each other and eventually consolidate into one. When tokens meet, the initiator's token absorbs the responder's token (\ie the initiator's $\sz$ absorbs the responder's $\sz$). The token size remains constant unless absorption occurs. Initially, as token sizes are $1$, when the number of tokens becomes one, the size of the token becomes $n$.

An agent's size changes if and only if an agent without a token interacts with another whose size is larger than its own. If the maximum token size is $k$, then all agents' sizes are less than or equal to $k$. Therefore, when the number of tokens reduces to one, all agents' sizes are less than or equal to $n$. 
Information about the population size propagates to all agents, thereby triggering a reset.

Since there is only one token in the population, token and agent sizes do not change after the number of tokens reduces to one. The protocol $\CIW_n$ runs if and only if an initiator's size and a responder's size are equal. Therefore, an agent whose size is $n$ never interacts with an agent whose size is less than $n$. Consequently, all agents' sizes equal $n$, and $\CIW_n$ operates under correct assumptions.
Protocol $\CIW_n$ runs correctly according to Theorem~\ref{n_teiri}, thereby ensuring $\CIG$ operates effectively.

Additionally, it always holds that $\forall a\in V:a.\sz\le n$ in any configuration of $\CIG$, and $\forall a\in V:a.\cnt\le n$. The former is proved as described above. From protocol $\CIG$, it always holds that $a\in V:a.\cnt \le a.\sz$, hence the latter is also true. Thus, $\CIG$ uses $O(n^2)$ states.
\end{proof}

\begin{lemma}\label{lemma:P:incomplete}
If the communication graph is not complete, every execution of $\CIG$ stabilizes in $O(n\hit_{\hat{G}}\log{n})$ steps, both in expectation and with high probability, under the uniformly random scheduler. 
\end{lemma}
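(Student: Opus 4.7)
The plan is to prove the lemma in two conceptual steps: (A) argue that once every agent's $\sz$ equals $n$, no agent will ever reach phase 4, so the current configuration is already stable; and (B) bound by $O(n\hit_{\hat{G}}\log n)$ the number of interactions before every agent has $\sz=n$.

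For (A), I would first establish a $\sz=n$-restricted analogue of Lemma~\ref{lemma:v34}:
\[
\sum_{a \in V_3\cup V_4,\,a.\sz=n} a.\cnt = |\{a \in V_3\cup V_4:\,a.\sz=n\}|
\]
holds throughout any execution of $\CIG$. This is preserved because interactions between two $\sz=n$ agents follow the $\CIW_n$ transition (which preserves the relation by Lemma~\ref{lemma:v34}'s argument), while any event that raises an agent's $\sz$ to $n$ calls $\RReset$ and so places the agent into the $\sz=n$ group in phase 1 with $\cnt=1$, outside $V_3\cup V_4$. I would then fix an agent $b$ whose out-degree is at most $n-2$, which exists since $G$ is not complete, and observe that from the moment $b.\sz$ first reaches $n$, $b$ evolves purely under $\CIW_n$ starting from the reset state $(L,1,0,1)$. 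The out-degree argument from Theorem~\ref{n_teiri} then shows that $b$ cannot advance past phase 2, so $|\{a\in V_3\cup V_4:\,a.\sz=n\}|\le n-1$ at all times, and combined with the invariant above no $\sz=n$ agent ever sees $\cnt=n$. A short induction on ``the first time a $\sz=n$ agent enters phase 4'' rules out the epidemic route as well, and any agent that transiently entered phase 4 with $\sz<n$ was reset the moment its $\sz$ was updated. Thus once all $\sz$-values equal $n$, every agent outputs $\no$ forever.

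For (B), I would analyze token movement and propagation of the maximum $\sz$-value as (coalescing) simple random walks on the undirected multigraph $\hat{G}$. Under the uniformly random scheduler each interaction picks an arc of $E$ uniformly, so the current token holder (and, later, the frontier of the maximum $\sz$-value) moves as a simple random walk on $\hat{G}$, whose steps take $O(n)$ interactions in expectation since a vertex of degree $d$ in $\hat{G}$ is selected at rate $d/|E|$ and total degree is $2|E|$. For the \emph{coalescence} sub-phase, the $n$ initial tokens do coalescing random walks on $\hat{G}$ and merge on collision; by the standard coalescing-random-walks bound on $\hat{G}$, they all merge into one within $O(\hit_{\hat{G}}\log n)$ walk steps with high probability, and by the invariant $\sum_{a.\token=\top}a.\sz=n$ the sole surviving token has size $n$. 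For the \emph{propagation} sub-phase, the unique size-$n$ token then walks on $\hat{G}$ (writing $\sz=n$ into each agent it visits), and the ``larger wins'' rule spreads $n$ further as a one-way epidemic; Matthews's cover-time bound gives $O(\hit_{\hat{G}}\log n)$ additional walk steps until all agents have $\sz=n$. Converting walk steps to interactions (factor $O(n)$) and applying a union bound yields the $O(n\hit_{\hat{G}}\log n)$ interaction bound both in expectation and with probability $1-O(1/n)$.

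The main obstacle will be the random-walk side of (B): confirming that the token's trajectory is a genuine simple random walk on $\hat{G}$ even in the presence of merges (so that standard coalescence and cover-time theorems apply verbatim), and showing that the walk-step-to-interaction conversion holds with high probability rather than merely in expectation. A tighter combined treatment of coalescence and propagation could potentially save a logarithmic factor, but handling the two sub-phases separately and paying the union-bound cost is enough for the stated bound.
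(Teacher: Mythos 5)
Your proposal is correct in outline and follows the same two-part decomposition as the paper: (A) once every agent has $\sz=n$ the configuration is stable with all outputs $\no$, and (B) all agents reach $\sz=n$ within $O(n\hit_{\hat{G}}\log n)$ interactions. Where you differ is in how each part is discharged. For (A), the paper simply invokes Lemma~\ref{lemma:incomplete}; your version, with the $\sz=n$-restricted analogue of Lemma~\ref{lemma:v34} and the observation that an agent first attains $\sz=n$ only through a transition that calls $\RReset$, is actually the more careful treatment, since it explicitly handles the interleaving of resets with the embedded $\CIW_n$ execution and rules out stale phase-4 states. For (B), the paper outsources both sub-phases to cited results (Sudo~\etal's token-merging lemma for coalescence in $O(n\hit_{\hat{G}}\log n)$ interactions, and Alistarh~\etal's bound for propagation), whereas you re-derive them from coalescing-random-walk and Matthews cover-time bounds on $\hat{G}$; this buys self-containedness at the cost of having to justify the walk-step-to-interaction conversion yourself.

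That conversion is the one place your argument as written would not survive scrutiny. The claim that a walk step costs $O(n)$ interactions in expectation is false pointwise: when the token sits at a vertex of degree $d$ in $\hat{G}$, the expected wait until it moves is $|E|/d$, which can be $\Theta(n^2)$ for $d=O(1)$; averaging the degree over the stationary distribution does not bound the cost along the actual trajectory to a fixed target. The correct route is aggregate: the expected number of interactions to hit $t$ from $s$ equals $|E|\cdot\ex_s\bigl[\sum_{i<\tau_t}1/\deg(X_i)\bigr]$, and by the occupation-measure identity the expected number of visits to $v$ before $\tau_t$ is at most $\frac{\deg(v)}{2|E|}\cdot 2\hit_{\hat{G}}$, whence the sum is at most $n\hit_{\hat{G}}/|E|$ and the interaction-clock hitting time is $O(n\hit_{\hat{G}})$. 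This is precisely the content of the lemma the paper cites, so your plan closes once you either prove this identity or cite it; you correctly flagged this as the main obstacle. A second, minor point: for the propagation sub-phase your cover-time argument (the unique token visits everyone) suffices for the stated bound only because every interaction of the size-$n$ token with a smaller-$\sz$ agent both moves the token and overwrites that agent's $\sz$ (lines 11--14 of Algorithm~\ref{al:P}), so you do not actually need the one-way-epidemic bound the paper cites — but you do need the same aggregate conversion there as well.
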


\begin{proof}
After the size counting process has completed and $\sz = n$ has been propagated to all agents, all agents output $\no$ and their output does not change, as established in Lemma~\ref{lemma:incomplete}.

Sudo, Shibata, Nakamura, Kim, and Masuzawa~\cite{SudoGlobal} showed that the number of steps required for all tokens to merge into one is $O(n \hit_{\hat{G}} \log{n})$ in expectation (Lemma 4). They assume that the communication graph is undirected, \ie $(u,v) \in E$ implies $(v,u) \in E$ for any $u,v \in E$. However, this lemma does not depend on this assumption, thus it is applicable in our context. Moreover, the upper bound $O(n \hit_{\hat{G}} \log{n})$ also holds with high probability.
In fact, their proof for ``Lemma 4'' essentially proves that for any $c \ge 1$, all tokens merge into one within $O(c \cdot n \cdot \hit_{\hat{G}} \log{n})$ with probability $1-\binom{n}{2} \cdot n^{-c}=1-O(n^{2-c})$. Thus, by setting $c = 3$, we can easily establish the ``with high-probability'' bound.

Alistarh, Rybicki, and Voitovych~\cite{NearArbGraph} provided an upper bound on the time complexity required for the information of one agent to propagate to all agents on arbitrary graphs. On an arbitrary graph $G$, the worst-case time is $O(m(\log{n} + D))$ steps in expectation and $O(m(\log n + D) \log n)$ steps with high probability.

Therefore, the process of counting sizes and spreading $\sz = n$ to all agents occurs within $O(n \hit_{\hat{G}} \log{n})$ steps, both in expectation and with high probability.
\end{proof}

\begin{lemma}\label{lemma:P:complete}
If the communication graph is complete, every execution of $\CIG$ stabilizes in $O(n^3\log{n})$ steps, both in expectation and with high probability, under the uniformly random scheduler. 
\end{lemma}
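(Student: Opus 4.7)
The plan is to split the execution at the first moment $T$ at which every agent has $\sz = n$. Before $T$ is the ``size computation phase,'' during which tokens merge and the value $n$ is broadcast; after $T$ the protocol behaves essentially as a run of $\CIW_n$, since no further resets occur.

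For the size computation phase, I reuse the analysis already assembled in Lemma~\ref{lemma:P:incomplete}: the time to merge all tokens into one and to propagate the correct size to every agent is $O(n\hit_{\hat{G}}\log n)$ both in expectation and with high probability. Specializing to a complete graph $G$, a simple random walk on $\hat{G}$ (a doubled $K_n$) has hitting time $\hit_{\hat{G}}=\Theta(n)$, so this phase contributes $O(n^2\log n)$ steps, both in expectation and with high probability.

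For the post-$T$ phase, once every agent has $\sz = n$ the check $a.\sz=b.\sz$ in Algorithm~\ref{al:P} is satisfied at every interaction, so each interaction executes one step of $\CIW_n$. The subtle point is that the $\CIW$-variables at time $T$ may not form the initial configuration of $\CIW_n$: agents upgraded to $\sz = n$ earlier may already have been participating in $\CIW_n$ in various phases, while the last-upgraded agent is necessarily in the reset (initial) state. I would resolve this by arguing that the configuration at time $T$ lies in the reachable state space of $\CIW_n$ from its own initial configuration, in which case the bound of Theorem~\ref{theorem:ntime} applies immediately. Equivalently, one can rerun the round-counting argument of Lemma~\ref{lemma:complete}: at least one leader in phase~$1$ exists (the just-reset agent), the cap on $\cnt$ enforced by $\CIG$ preserves the relevant invariants of $\CIW_n$, and therefore leader election, degree counting, phase-$3$ aggregation, and the phase-$4$ epidemic each complete in $O(n)$ rounds exactly as in Lemma~\ref{lemma:complete}. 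By Lemma~\ref{lemma:round}, $O(n)$ rounds on the complete graph correspond to $O(n^3\log n)$ steps both in expectation and with high probability.

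Adding the two bounds yields $O(n^2\log n)+O(n^3\log n)=O(n^3\log n)$ steps in total, as required. The main obstacle is the post-$T$ step: I need to verify that the $\CIG$-specific resets (and the ``$\cnt\gets\sz$'' cap) preserve the small set of $\CIW_n$ invariants on which Lemma~\ref{lemma:complete} rests, so that round-based progress after $T$ proceeds exactly as in the stand-alone analysis of $\CIW_n$. Once this invariant check is in place, the remaining arithmetic is immediate.
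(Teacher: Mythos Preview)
Your proposal is correct and follows essentially the same two-phase decomposition as the paper: bound the size-computation phase, then invoke the $\CIW_n$ analysis (Theorem~\ref{theorem:ntime}). The only differences are cosmetic: the paper bounds token merging and propagation directly on the complete graph ($O(n^2)$ via the leader-election analogy and $O(n\log n)$ via one-way epidemic) rather than routing through Lemma~\ref{lemma:P:incomplete} and $\hit_{\hat{G}}=\Theta(n)$, but your $O(n^2\log n)$ is equally sufficient. Your explicit discussion of why the $\CIW$-configuration at time $T$ still admits the $O(n)$-round argument of Lemma~\ref{lemma:complete} is in fact more careful than the paper's own proof, which simply cites Theorem~\ref{theorem:ntime} without addressing that point.
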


\begin{proof}
Due to the symmetry of complete graphs, the process of tokens consolidating into one is analogous to leader election. Consequently, the size counting requires $O(n^2)$ steps. The spreading of $\sz = n$ to all agents is similar to the propagation of an infection, which takes $O(n \log{n})$ steps. Therefore, according to Theorem~\ref{theorem:ntime}, $\CIG$ stabilizes in $O(n^3 \log{n})$ steps, both in expectation and with high probability.
\qed
\end{proof}

The following theorem is derived from Lemma~\ref{lemma:P:incomplete},and ~\ref{lemma:P:complete}.

\begin{theorem}\label{Pjikan}
Any execution of $\CIG$ stabilizes in $O((\hit_{\hat{G}}+n^2)n \log n)$ steps both in expectation and with high probability under the uniformly random scheduler.    
\end{theorem}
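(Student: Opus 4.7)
\begin{psketch}
The plan is to derive the combined bound by a straightforward case analysis on whether the communication graph $G$ is complete, since Lemmas~\ref{lemma:P:incomplete} and \ref{lemma:P:complete} already cover these two cases exhaustively. First I would rewrite the target bound as
\[
(\hit_{\hat{G}}+n^2)\,n\log n \;=\; n\,\hit_{\hat{G}}\log n \;+\; n^3\log n,
\]
which shows that the target expression dominates both the non-complete bound $O(n\,\hit_{\hat{G}}\log n)$ and the complete bound $O(n^3\log n)$, since $\hit_{\hat{G}}\ge 0$ and $n^2\ge 0$ force each summand of the right-hand side to be at most a constant times the left-hand side.

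Then I would split on the structure of the communication graph. If $G$ is not complete, I apply Lemma~\ref{lemma:P:incomplete} to conclude that $\CIG$ stabilizes within $O(n\,\hit_{\hat{G}}\log n)$ interactions both in expectation and with high probability under the uniformly random scheduler. If $G$ is complete, I apply Lemma~\ref{lemma:P:complete} to conclude stabilization within $O(n^3\log n)$ interactions with the same two guarantees. In either case the bound is absorbed by $O((\hit_{\hat{G}}+n^2)\,n\log n)$, and since exactly one of the two cases must hold for any input graph, the theorem follows.

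The main obstacle is essentially non-existent: the upper envelope in the theorem statement is built precisely so that both lemma bounds fit inside it. The only delicate point worth noting explicitly is that both Lemma~\ref{lemma:P:incomplete} and Lemma~\ref{lemma:P:complete} deliver their bounds simultaneously in expectation and with high probability, so these two qualifiers transfer cleanly through the case split without any additional probabilistic argument (in particular, no union bound over the two cases is needed, since the cases depend only on the fixed input graph $G$, not on the random schedule).
\end{psketch}
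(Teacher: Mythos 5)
Your proposal is correct and matches the paper's own argument: the theorem is obtained directly from Lemmas~\ref{lemma:P:incomplete} and \ref{lemma:P:complete} by the case split on whether $G$ is complete, with the stated bound $O((\hit_{\hat{G}}+n^2)n\log n)$ serving as a common upper envelope for both cases. Your additional remark that the expectation and high-probability guarantees transfer through the deterministic case split without any union bound is a correct and welcome clarification.
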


\section{Impossibility}
Yasumi \etal~\cite{yasumi_et_al:LIPIcs.OPODIS.2021.13} introduced the transformed graph $f(G)=(V',E')$
for any communication graph $G=(V,E)$.
\begin{definition}
For any digraph $G=(V,E)$ with $V = \{v_1, v_2, ..., v_n\}$,
the digraph $f(G)=(V',E')$ is defined as follows:
$ V' = \{v_1', v_2', ..., v_{2n}'\}$
and $E' = \{(v_x', v_y'), (v_{x+n}', v_{y+n}') \in V' \times V' | (v_x, v_y) \in E\} \cup \{(v_1', v_{z+n}'), (v_{1+n}', v_z') \in V' \times V' | (v_1, v_z) \in E\}
\cup \{(v_{z+n}',v_1'), (v_z',v_{1+n}') \in V' \times V' | (v_z,v_1) \in E\}
$.
\end{definition}
They proved that for any simple and connected directed graph $G$, no protocol can distinguish between $G$ and $f(G)$ (Lemma 5 in \cite{yasumi_et_al:LIPIcs.OPODIS.2021.13}).\footnote{
They prove this fact for any \emph{undirected graph} $G=(V,E)$, where ``undirected'' means that $(u,v) \in E$ holds if and only if $(v,u) \in E$ for any $u, v \in V$. However, their proof also applies to general directed graphs without any modifications.
}
Specifically, they showed that for any protocol $\mathcal{P}$, there exists a weakly-fair execution $\Xi'$ on $f(G)$ that stabilizes to output $\yes$ (respectively, $\no$) if there is a weakly-fair execution $\Xi$ of $\mathcal{P}$ on $G$ that stabilizes to output $\yes$ (respectively, $\no$). Here, the statement that an execution stabilizes to output $x$ means that the execution reaches a stable configuration where all agents output $x$.

\begin{figure}[!htb]
\centering
\includegraphics[scale=0.3]{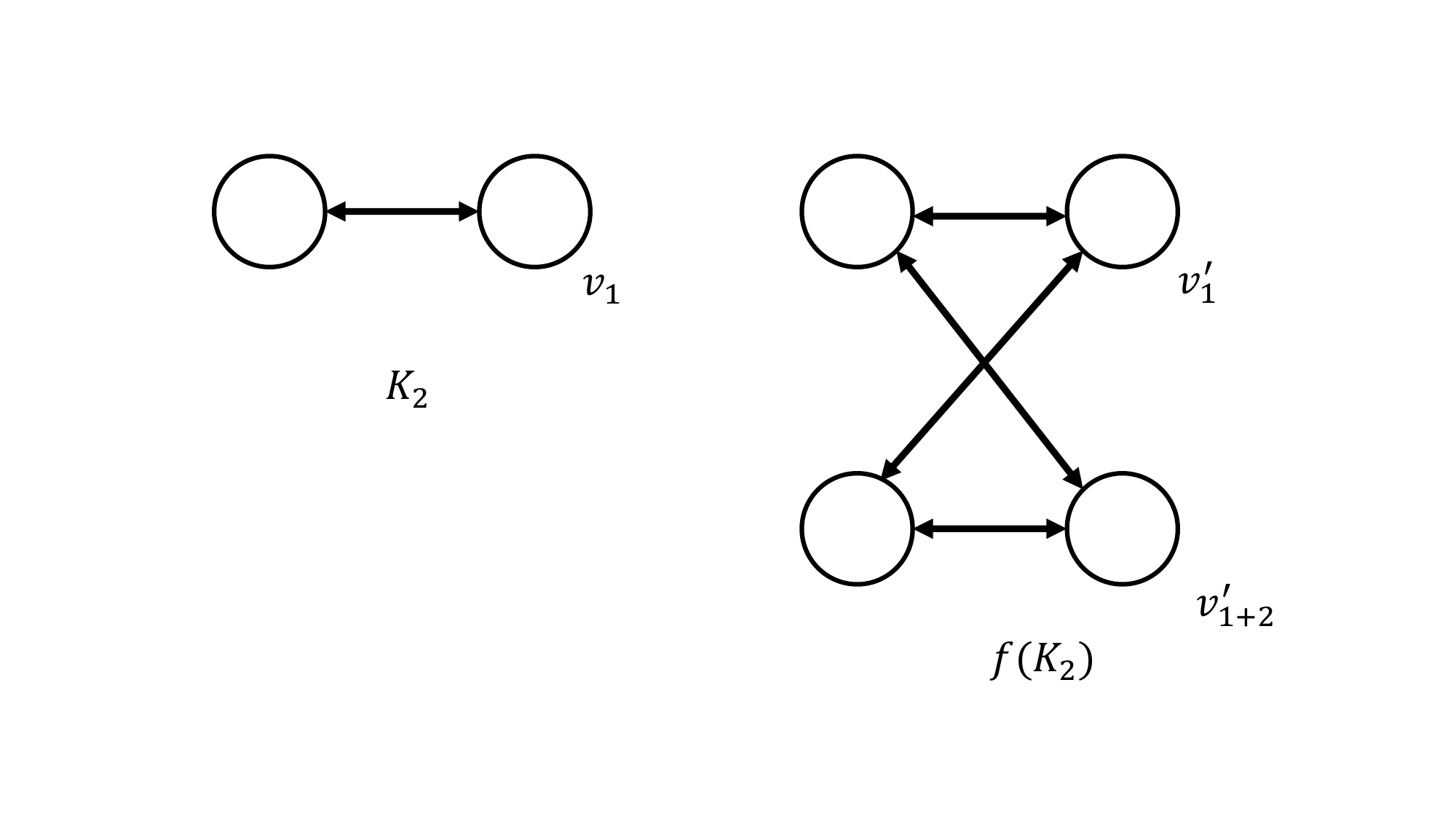}
\caption{Graphs $K_2$ and $f(K_2)$}
\label{y26}
\end{figure}

For any complete directed graph $K_n$ with size $n$, where $n \geq 2$, $f(K_n)$ is not complete.
(See Figure \ref{y26} for the case $n=2$.)
Consequently, the following theorem directly follows from the aforementioned lemma presented by Yasumi \etal~\cite{yasumi_et_al:LIPIcs.OPODIS.2021.13}.

\begin{theorem}\label{impossible:P}
There is no protocol that identifies complete graphs with initial knowledge of $P$ under weak fairness.
\end{theorem}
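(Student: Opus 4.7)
The plan is to derive the theorem as a direct consequence of Yasumi et al.'s transformation lemma (Lemma 5 in \cite{yasumi_et_al:LIPIcs.OPODIS.2021.13}), which asserts that no population protocol can distinguish a simple connected digraph $G$ from its transform $f(G)$ under weak fairness. The approach is by contradiction: suppose some protocol $\mathcal{P}$ identifies complete graphs under weak fairness whenever all agents share knowledge of a common upper bound $P$ on the population size.

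First I would fix a convenient test instance. Pick any $n \geq 2$ with $2n \leq P$; such an $n$ exists whenever $P \geq 4$, and the cases $P \in \{2,3\}$ can be absorbed by taking the smallest meaningful instance, so I would state the impossibility for all sufficiently large $P$, which suffices to refute the existence of a single uniform protocol. Since $K_n$ is a complete digraph on $n \leq P$ vertices, the assumed correctness of $\mathcal{P}$ forces every weakly fair execution of $\mathcal{P}$ on $K_n$ to stabilize with all agents outputting $\yes$. In particular, such an execution exists.

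Next I would invoke Yasumi et al.'s lemma with $G = K_n$. Applied to the $\yes$-stabilizing weakly fair execution just produced, the lemma yields a weakly fair execution of $\mathcal{P}$ on $f(K_n)$ that also stabilizes with all agents outputting $\yes$. Because $|V(f(K_n))| = 2n \leq P$, the protocol $\mathcal{P}$ is legitimately applicable to $f(K_n)$ under the same prior knowledge. However, by the definition of $f$, the only arcs of $f(K_n)$ that cross between the two $n$-vertex copies are those incident to the distinguished vertices $v_1'$ and $v_{1+n}'$; in particular, for $n \geq 2$ there is no arc between $v_2'$ and $v_{n+2}'$, so $f(K_n)$ is not complete. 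This contradicts the assumption that $\mathcal{P}$ correctly identifies complete graphs.

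The main obstacle is a conceptual one rather than a technical one: it is confirming that Yasumi et al.'s Lemma 5, originally phrased for undirected communication graphs, carries over verbatim to the directed setting used here. The excerpt's footnote explicitly states that their proof applies without modification to directed graphs, so I would simply cite that remark; the remaining steps (bounding $|V(f(K_n))|$ by $P$ and observing non-completeness of $f(K_n)$) are routine combinatorial checks from the definition of $f$.
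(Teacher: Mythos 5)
Your proposal is correct and follows essentially the same route as the paper: invoke Yasumi et al.'s Lemma~5 with $G = K_n$, observe that $f(K_n)$ is not complete while both graphs have at most $P$ vertices, and conclude the contradiction. The paper states this more tersely (it simply notes the theorem "directly follows" from the lemma), whereas you additionally spell out the choice of $n$ with $2n \le P$ and the non-completeness check, which are the right details to fill in.
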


\clearpage

\bibliographystyle{plain}
\bibliography{short}

\clearpage
%%
%% If your work has an appendix, this is the place to put it.
%\appendix

\end{document}